\renewcommand{\A}{\mathcal{A}}
\newcommand{\B}{\mathcal{B}}
\renewcommand{\C}{\mathcal{C}}
\newcommand{\Aff}{\mathsf{Aff}}
\newcommand{\defeq}{:=}
\newcommand{\gen}[1]{\langle #1 \rangle}
\renewcommand{\M}{\mathcal{M}}
\newcommand{\N}{\mathbb{N}}
\renewcommand{\P}{\mathcal{P}}
\newcommand{\PTIME}{\textsf{P}}
\newcommand{\Q}{\mathbb{Q}}
\renewcommand{\R}{\mathbb{R}}
\newcommand{\Reg}{\mathsf{R}}
\newcommand{\rk}{\mathit{rank}}
\renewcommand{\SL}{\mathsf{SL}(2,\mathbb{Z})}
\newcommand{\UT}{\mathsf{UT}}
\newcommand{\UTM}[3]{\begin{pmatrix} #1 & #2 \\ & #3 \end{pmatrix}}
\newcommand{\Ve}[2]{\begin{pmatrix} #1 \\ #2 \end{pmatrix}}
\newcommand{\VVe}[2]{\Ve{#1}{#2}}%{\begin{pmatrix} #1 & #2 \end{pmatrix}^T} %space-efficient version of VE
\newcommand{\Z}{\mathbb{Z}}
\newcommand{\zero}{\mathbf{0}}
\newcommand{\trans}[1]{\xrightarrow{#1}}
\newcommand{\T}{\mathcal{T}}
\declaretheorem[name=Proposition]{prop}
\title{On Affine Reachability Problems}
\author{Stefan Jaax}{Technische Universit\"at M\"unchen, Germany}{}{}{Supported by an ERC Advanced Grant (787367: PaVeS).}
\author{Stefan Kiefer}{University of Oxford, UK}{}{}{Supported by a Royal Society University Research Fellowship.}%mandatory, please use full name; only 1 author per \author macro; first two parameters are mandatory, other parameters can be empty.
\authorrunning{Stefan Jaax and Stefan Kiefer} % abbreviated author list (for running head)
\keywords{Counter Machines, Matrix Semigroups, Reachability}%TODO mandatory; please add comma-separated list of keywords
\begin{document}

\maketitle              % typeset the header of the contribution

\begin{abstract}
We analyze affine reachability problems in dimensions $1$ and~$2$.
We show that the reachability problem for 1-register machines over the integers with affine updates is PSPACE-hard, hence PSPACE-complete, strengthening a result by Finkel et al.\ that required polynomial updates.
Building on recent results on two-dimensional integer matrices, we prove NP-completeness of the mortality problem for 2-dimensional integer matrices with determinants $+1$ and $0$.
Motivated by tight connections with 1-dimensional affine reachability problems without control states, we also study the complexity of a number of reachability problems in finitely generated semigroups of 2-dimensional upper-triangular integer matrices. %, based on a variety of techniques from recent years.
\end{abstract}

%\titlerunning{Markov Chains and Unambiguous Automata}  % abbreviated title (for running head)

\section{Introduction} \label{sec-intro}

\subparagraph*{Counter machines.}

\emph{Counter machines} are abstract models of computation, consisting of a finite control and a set of registers which store numbers.
Upon taking a transition, the machine may perform simple arithmetic on the registers.
There is a great variety of such machines, depending on the domain of the registers ($\Q$, $\Z$, $\N$, \ldots), whether the content of the registers influences the control flow (e.g., via zero tests), and the types of allowed register updates (changes can only be additive, linear, affine, polynomial, \ldots).

As a model of programs with arithmetic, counter machines relate to program analysis and verification.
They also provide natural classes of finitely presented systems with infinitely many states, with a regular-shaped transition structure.
Minsky~\cite{Minsky61} showed that counter machines with nonnegative integer registers, additive updates, and zero tests are Turing-powerful.
\emph{Vector addition systems with states (VASS)}, which are roughly equivalent to \emph{Petri nets}, are a related well-studied model without zero tests.
%\stejaax{Should we really write equivalently? - Unlike VASS, Petri nets are stateless. However, reachability in VASS can be easily reduced to reachibility in Petri nets via the chain of reductions VASS $\rightarrow$ VAS $\rightarrow$ Petri net.}
Reachability in this model is decidable, albeit with very high complexity~\cite{Czerwinski19}.
Recent work~\cite{blondinraskin} considers reachability in certain variants of VASS, including in affine VASS, which are closely related to affine register machines (see below) but have multiple counters.

In this paper, \textbf{we establish the precise complexity of reachability in \emph{affine register machines}.}
These are counter machines with a single integer register (two registers already lead to undecidability~\cite[Chapter~2.5]{ReichertThesis}), no zero tests, and affine register updates; i.e., the transitions are labelled with updates of the form $x := a x + b$, where $x$ stands for the register and $a,b$ are integer coefficients.
Finkel et al.~\cite{FGH} considered a more general model, \emph{polynomial register machines}, with the difference that the updates consist of arbitrary integer polynomials, not just affine polynomials $a x + b$.
The main result of~\cite{FGH} is that reachability in polynomial register machines is \PSPACE-complete.
We show that reachability in affine register machines is also \PSPACE-hard, hence \PSPACE-complete.
Niskanen~\cite{Niskanen17} strengthened the lower bound from~\cite{FGH} in an orthogonal direction, by showing \PSPACE-hardness in the case with polynomial updates but without control states.

%We leave the complexity of the affine case without control states open.
As we explain in the following (see also Proposition~\ref{prop-matrix-affine} below), the stateless case is intimately connected with finitely generated monoids over two-dimensional upper-triangular integer matrices.
This leads us to investigate several natural reachability problems in such monoids.

\subparagraph*{Matrix monoids.}
A finite set of matrices $\M \subset \Q^{d \times d}$ generates a monoid $\gen{\M}$ under matrix multiplication, i.e., $\gen{\M}$ is the set of products of matrices from~$\M$, including the identity matrix, which we view as the empty product.
Algorithmic problems about such monoids are hard, often undecidable.
For example, Paterson~\cite{Paterson70} showed in 1970 that the \emph{mortality} problem, i.e., deciding whether the zero matrix is in the generated monoid, is undecidable, even for integer matrices with $d=3$.
It remains undecidable for $d = 3$ with $|\M| = 6$ and for $d = 18$ with $|\M| = 2$; see~\cite{Neary15}. %\cite{HalavaHH07}.
%\stejaax{Should we mention that these results pertain to $\Z^{3\times3}$ already (instead of the more general $\Q^{3\times3}$)?}
Mortality for two-dimensional matrices is known to be \NP-hard~\cite{BellHP12}, but decidability remains a long-standing open problem; see, e.g.,~\cite{PotapovS17}.

Mortality is a special case of the \emph{membership} problem: given $\M$ and a matrix~$T$, is $T \in \gen{\M}$?
Two other natural problems consider certain linear projections of the matrices in~$\gen{\M}$:
The \emph{vector reachability} problem asks, given $\M$ and two vectors $\vec{x}, \vec{y} \in \Q^d$, if there is a matrix $M \in \gen{\M}$ such that $M \vec{x} = \vec{y}$.
Similarly, the \emph{scalar reachability} problem asks if there is a matrix $M \in \gen{\M}$ such that $\vec{y}^T M \vec{x} = \lambda$ holds for given vectors $\vec{x}, \vec{y}$ and a given scalar $\lambda \in \Q$.
For $d=2$ none of these problems are known to be decidable, not even for integer matrices.
In the case $d=2$, mortality \emph{is} known to be decidable when $|\M|=2$ holds~\cite{Bournez2002}, and for integer matrices whose determinants are in $\{-1,0,+1\}$ (see~\cite{NuccioRodaro08}).

Even the case of a single matrix, i.e., $|\M| = 1$, is very difficult; see \cite{OuaknineWorrell12} for a survey.
This case is closely related to the algorithmic analysis of \emph{linear recurrence sequences}, which are sequences $u_0, u_1, \ldots$ of numbers such that there are constants $a_1, \ldots, a_d$ such that $u_{n+d} = a_1 u_{n+d-1} + a_2 u_{n+d-2} + \cdots + a_d u_n$ holds for all $n \in \N$.
In the case $|\M| = 1$ the vector reachability problem is referred to as the \emph{orbit} problem, and the scalar reachability problem as the \emph{Skolem} problem.
The orbit problem is decidable in polynomial time~\cite{KannanLipton}, but the Skolem problem is only known to be decidable for $d \le 4$ (this requires Baker's Theorem)~\cite{OuaknineWorrell12,ChonevOrbit}.

In the following, we do not restrict $|\M|$ but focus on integer matrices in $d=2$.
Recently, there has been steady progress for certain special cases.
It was shown by Potapov and Semukhin~\cite{PotapovS17} that the membership problem for two-dimensional integer matrices is decidable for non-singular matrices.
This result builds on automata-theoretic techniques developed, e.g., in~\cite{ChoffrutKarhumaki05}, where it was shown that the problem of deciding whether $\gen{\M}$ is a group is decidable.
At its heart, this technique exploits the special structure of the group of matrices with determinants $\pm 1$ and its subgroups.
For matrices with determinant~$1$, further results are known, namely decidability of vector reachability~\cite{PotapovSemukhin16} and \NP-completeness of the membership problem~\cite{BellIdentity17}.
If all matrices in~$\M$ have determinant~$1$ and $\M$ is closed under inverses, then $\gen{\M}$ is a group.
In this case, one can decide in polynomial time for a given matrix~$M$ whether $M$ or $-M$ is in~$\gen{\M}$~\cite{GurevichSchupp07}.

Building on these recent results \cite{PotapovSemukhin16,BellIdentity17} \textbf{we prove that the mortality problem for two-dimensional integer matrices with determinants $+1$ or~$0$ is \NP-complete}.
The main result of~\cite{BellHP12} was \NP-hardness for the same problem but allowing also for determinant~$-1$.
Thus, we strengthen the lower bound from~\cite{BellHP12} by disallowing determinant~$-1$, and our subset-sum based proof is considerably simpler.

We then focus on \emph{upper-triangular} integer matrices, i.e., integer matrices of the form $\begin{pmatrix} a & b \\ 0 & c\end{pmatrix}$.
Curiously, decidability of the membership, vector reachability, and scalar reachability problems are still challenging, and indeed open, despite the severe restriction on the matrix shape and dimension.
This class of reachability problems is motivated by its tight connection to (stateless) affine reachability.
For instance, \emph{affine reachability over~$\Q$} reduces to scalar reachability for upper-triangular two-dimensional integer matrices; see Proposition~\ref{prop-matrix-affine}.
Affine reachability over~$\Q$ asks, given a set of affine rational functions in one variable and two rational numbers $x, y \in \Q$, whether $x$ can be transformed into~$y$ using one or more applications of the given functions, chosen nondeterministically.

Whereas affine reachability over~$\Z$ is in~\PSPACE\ by~\cite{FGH}, decidability of affine reachability over~$\Q$ is open.
The problem is related to %can be viewed as a ``nondeterministic'' variant of
the reachability problem with a single, but only piecewise affine, function (``\emph{piecewise affine maps}''); this problem is not known to be decidable either; see \cite{KPC08,BKP18}.
Variants of piecewise affine reachability, also over~$\Z$, are studied in~\cite{BAmram15}.

\subparagraph*{Organization of the paper.}
In Section~\ref{sec-prelims} we state tight (perhaps folklore) connections between (1) reachability problems in monoids over two-dimensional upper-triangular integer matrices, and (2) reachability problems of one-dimensional affine functions.
We then make the following contributions:
\begin{enumerate}
\item
In Section~\ref{sec-affine-reach-pspace-hard} we show that reachability in affine register machines is \PSPACE-hard, hence \PSPACE-complete.
\item
In Section~\ref{sec-SL} we prove \NP-completeness of the mortality problem for two-dimensional integer matrices with determinants $+1$ or~$0$.
\item
In Section~\ref{sec-2dUTIM} we study the complexity reachability problems in monoids over two-dimensional upper-triangular integer matrices:
\begin{enumerate}
\item
In Section~\ref{sec-determinant-one} we study the case with $\pm1$ on the diagonal.
Establishing a connection with so-called \emph{$\Z$-VASS}~\cite{HaaseHalfon14} allows us to prove \NP-completeness, although we show that the case where all generator matrices have determinant~$-1$ is in~\PTIME\ by a reduction to a linear system of Diophantine equations over the integers.
\item
In Section~\ref{ssec-vec-reachability} we study vector reachability.
We show that the problem is hard for affine reachability over~$\Q$, hence decidability requires a breakthrough,
but the case where the bottom-right entries are non-zero %reduces to reachability in polynomial register machines, hence
is in~\PSPACE.
\item
In Section~\ref{section-membership} we study the membership problem.
If both diagonal entries are non-zero, we show that the problem is \NP-complete, which in turn shows \NP-completeness of the following problem: given $n+1$ non-constant affine functions over~$\Z$ in one variable, can the $n+1$st function be represented as a composition of the other $n$ functions?
The case where only one of the diagonal entries is restricted to be non-zero is decidable in~\PSPACE.
Finally, for the case where both diagonal entries may be~$0$, we establish reductions between membership and scalar reachability, suggesting that decidability of membership would also require a breakthrough.
\end{enumerate}
\end{enumerate}
We conclude in Section~\ref{sec-conclusions}.
For space reasons, some missing proofs are in an appendix.

%!TEX root = main.tex
\section{Preliminaries} \label{sec-prelims}

We write $\Z$ for the set of integer numbers, $\N = \{0,1,2, \ldots\}$ for the set of nonnegative integers, and $\Q$ for the set of rationals.
We write $\UT$ for the set (and the monoid under matrix multiplication) of two-dimensional upper-triangular integer matrices:
\[
 \UT \ \defeq \ \left\{ \begin{pmatrix} x & y \\ 0 & z \end{pmatrix} \;\middle\vert\; x,y,z \in \Z \right\}
\]
We may drop the $0$ in the bottom-left corner and write $\begin{pmatrix} x & y \\ & z\end{pmatrix}$ for matrices in~$\UT$.
Let $\Phi(A)$ be a constraint for $A \in \UT$.
We write %By $\UT[\Phi(A)]$ we denote the set
$\UT[\Phi(A)] \defeq \left\{A \in \UT \mid \Phi(A) \right\}$, e.g.,
$\UT[A_{22} = 1]$ denotes the set of all upper-triangular matrices whose bottom-right
coefficient equals $1$.

For a finite set $\M$ of matrices, we write $\gen{\M}$ for the monoid generated by~$\M$ under matrix multiplication.
In this paper we consider mainly the following reachability problems:
\begin{itemize}
\item \textbf{Membership:} Given a finite set $\M \subseteq \UT$, and a matrix $T \in \UT$, is $T \in \gen{\M}$?
\item \textbf{Vector reachability:} Given a finite set $\M \subseteq \UT$, and vectors $\vec{x}, \vec{y} \in \Z^2$, is there a matrix $M \in \gen{\M}$ such that $M \vec{x} = \vec{y}$?
\item \textbf{Scalar reachability:} Given a finite set $\M \subseteq \UT$, vectors $\vec{x}, \vec{y} \in \Z^2$, and a scalar $\lambda \in \Z$, is there a matrix $M \in \gen{\M}$ such that $\vec{y}^T M \vec{x} = \lambda$?
    We refer to the special case with $\lambda = 0$ as the \emph{$0$-reachability} problem.
\end{itemize}
We write $\Aff(\Z)$ for the set (and the monoid under function composition) of affine functions:
\[
 \Aff(\Z) \ \defeq \ \left\{ x \mapsto a x + b \mid a, b \in \Z \right\} \ \subseteq \ \Z^\Z \text{\quad (where $\Z^\Z = \{f: \Z \to \Z\}$)}
\]
Define $\Aff(\Q)$ similarly, with $\Z$ replaced by~$\Q$.
For a finite set $\A$ of affine functions, we write~$\gen{\A}$ for the monoid (i.e., including the identity function $x \mapsto x$) generated by~$\A$ under function composition.
The motivation to study the matrix reachability problems above is their relationship to the following one-dimensional affine reachability problems:
\begin{itemize}
\item \textbf{Affine membership over $\Z$:} Given a finite set $\A \subseteq \Aff(\Z)$, and a function $f \in \Aff(\Z)$, is $f \in \gen{\A}$?
\item \textbf{Affine reachability over $\Z$:} Given a finite set $\A \subseteq \Aff(\Z)$, and numbers $x,y \in \Z$, is there a function $f \in \gen{\A}$ such that $f(x) = y$?
\item \textbf{Affine reachability over $\Q$:} The same problem with $\Z$ replaced by~$\Q$.
\end{itemize}

These problems are linked by the following proposition.
Recall from the definitions above that the matrices are restricted to be two-dimensional upper-triangular integer matrices.
\begin{restatable}{prop}{matrixaffinerestate}
\label{prop-matrix-affine}

\mbox{}
\begin{enumerate}
\item Affine membership over~$\Z$ is logspace inter-reducible with (matrix) membership restricted to matrices with~$1$ in the bottom-right corner.
\item Affine reachability over~$\Z$ is logspace inter-reducible with vector reachability restricted to matrices with~$1$ in the bottom-right corner and vectors with $1$ in the bottom entry.
\item Affine reachability over~$\Q$ is logspace inter-reducible with $0$-reachability restricted to matrices with non-zero entries in the bottom-right corner and vectors $\vec{x}, \vec{y} \in \Z^2$ such that the bottom entry of~$\vec{x}$ and the top entry of~$\vec{y}$ are non-zero.
\end{enumerate}
\end{restatable}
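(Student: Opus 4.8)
The plan is to build everything on the standard embedding of affine functions as upper-triangular matrices. Define the map $\phi \colon \Aff(\Z) \to \UT$ by $\phi(x \mapsto ax+b) \defeq \UTM{a}{b}{1}$. A direct computation shows that $\phi$ is a monoid homomorphism (composition of affine maps corresponds to matrix multiplication), that it is injective with image exactly $\UT[A_{22}=1]$, and that it is action-compatible in the sense that $\phi(f)\,\Ve{x}{1} = \Ve{f(x)}{1}$ for every $x \in \Z$. Moreover, since the bottom-right entry of a product equals the product of the bottom-right entries, $\gen{\M} \subseteq \UT[A_{22}=1]$ whenever $\M \subseteq \UT[A_{22}=1]$; thus $\phi$ restricts to a monoid isomorphism between $\gen{\A}$ and $\gen{\phi(\A)}$.

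Parts 1 and 2 then follow with essentially no further work. For Part 1, I would map an affine-membership instance $(\A, f)$ to the matrix-membership instance $(\phi(\A), \phi(f))$; since $\phi$ is an isomorphism onto $\UT[A_{22}=1]$, we have $f \in \gen{\A}$ iff $\phi(f) \in \gen{\phi(\A)}$, and the converse reduction applies $\phi^{-1}$. For Part 2, I would send $(\A, x, y)$ to $(\phi(\A), \Ve{x}{1}, \Ve{y}{1})$ and use action-compatibility: any $M = \phi(f)$ satisfies $M \Ve{x}{1} = \Ve{f(x)}{1}$, whose bottom entry is automatically $1$, so $M\Ve{x}{1} = \Ve{y}{1}$ iff $f(x) = y$. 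All of these transformations merely rewrite the input numbers into fixed-size matrices and vectors, and are computable in logspace.

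The substance is in Part 3, where rational data must be reconciled with integer matrices via projective coordinates. Represent a rational $p/q$ by the column $\Ve{p}{q}$ with $q \neq 0$, and a rational affine function $x \mapsto ax+b$ (with $a = n_a/d_a$, $b = n_b/d_b$) by the integer matrix $\UTM{n_a d_b}{n_b d_a}{d_a d_b}$ obtained by clearing denominators; its bottom-right entry $d_a d_b$ is non-zero. One checks that this matrix sends any $\Ve{p}{q}$ to a column representing $a(p/q)+b$, so it realizes the function projectively; since projective actions compose, any product of such matrices realizes the corresponding composition of functions. Writing the source as $x = x_1/x_2$ from $\vec{x} = \Ve{x_1}{x_2}$ and the target as $y = s/t$, the equality $f(x) = y$ is the cross-multiplication condition $t\,(M\vec{x})_1 - s\,(M\vec{x})_2 = 0$, i.e.\ exactly $\vec{y}^T M \vec{x} = 0$ with $\vec{y} = \Ve{t}{-s}$. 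The top entry $t$ of $\vec{y}$ (the denominator of the target) and the bottom entry $x_2$ of $\vec{x}$ (the denominator of the source) are non-zero, matching the stated restrictions. The reverse reduction reads an affine function with rational coefficients off each matrix having non-zero bottom-right corner, and recovers the source $x_1/x_2$ from $\vec{x}$ and the target $-y_2/y_1$ from $\vec{y} = \Ve{y_1}{y_2}$ (well-defined since $y_1 \neq 0$).

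The crux of the argument, and its only delicate point, is ensuring that the bilinear form $\vec{y}^T M \vec{x}$ faithfully expresses $f(x) = y$ rather than some spurious behaviour at infinity; this requires $(M\vec{x})_2 \neq 0$ for every $M \in \gen{\M}$. This is precisely what the two non-degeneracy hypotheses buy: because bottom-right entries multiply, they remain non-zero throughout $\gen{\M}$, and together with the non-zero bottom entry of $\vec{x}$ this forces $(M\vec{x})_2 \neq 0$. With that in hand, the cross-multiplication equivalence above is valid in both directions. The remaining obligations are routine: checking that $\phi$ and its inverse, the clearing of denominators, and the formation of the vectors are all logspace computable, which is immediate since they only perform a bounded number of products and sign changes on the input integers and output objects of fixed dimension.
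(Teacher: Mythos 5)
Your proof is correct and follows essentially the same route as the paper's: the isomorphism $f \mapsto \UTM{a}{b}{1}$ with the embedding $x \mapsto \Ve{x}{1}$ for parts 1 and 2, and for part 3 the projective (denominator-clearing) representation of rational affine maps as matrices $\UTM{a}{b}{c}$, $c \neq 0$, with $f(x)=y$ expressed as vanishing of the cross-multiplication form $\vec{y}^T M \vec{x}$. The only difference is presentational: the paper phrases part 3 via explicit quotients by scalar equivalence on matrices and vectors, whereas you argue directly and make explicit the non-degeneracy point $(M\vec{x})_2 \neq 0$ that justifies the equivalence, which the paper leaves implicit.
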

\begin{proof}
The proof is fairly standard and thus skipped here. It can be found in Appendix~\ref{app:prop-matrix-affine}.
\end{proof}

Simple reductions show that these problems are all \NP-hard:
\begin{prop} \label{prop-NP-hardness}
Membership, vector reachability and $0$-reachability are all \NP-hard, even for matrices with only $1$s on the diagonal, and for $\vec{x} = \Ve{0}{1}$ and for $\vec{y} = \Ve{t}{1}$ (for vector reachability) resp.\ $\vec{y}^T = \begin{pmatrix} 1 & -t \end{pmatrix}$ (for $0$-reachability).
\end{prop}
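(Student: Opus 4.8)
The plan is to reduce all three problems, in their restricted forms, to a single arithmetic question and then reduce \textsc{Subset-Sum} to that question. First observe that generators with $1$s on the diagonal all have the shape $\UTM{1}{b}{1}$ and commute, and $\UTM{1}{b_1}{1}\cdots\UTM{1}{b_m}{1}=\UTM{1}{b_1+\cdots+b_m}{1}$; hence the set of upper-right entries occurring in $\gen{\M}$ is exactly the set of nonnegative integer combinations $\{\sum_i k_i b_i : k_i \in \N\}$ of the upper-right entries $b_i$ of the generators (the value $0$ coming from the empty product). For the three targets in the statement this is decisive: $\UTM{1}{t}{1}\in\gen{\M}$, or $M\Ve{0}{1}=\Ve{t}{1}$ for some $M\in\gen{\M}$, or $\begin{pmatrix}1 & -t\end{pmatrix}M\Ve{0}{1}=0$ for some $M\in\gen{\M}$, each holds iff $t$ is such a nonnegative combination (for the last one note $\begin{pmatrix}1 & -t\end{pmatrix}\UTM{1}{b}{1}\Ve{0}{1}=b-t$). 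So it suffices to produce, in logarithmic space, generator entries $b_1,\dots,b_m$ and a target $t$ such that $t$ is a nonnegative integer combination of the $b_i$ iff a given \textsc{Subset-Sum} instance is positive; the \emph{same} data then witnesses hardness of all three problems simultaneously.

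Given a \textsc{Subset-Sum} instance $a_1,\dots,a_n$ with target $s$, I would use a positional base-$B$ encoding with $B\defeq 2n+1$ and three separated blocks: low positions $0,\dots,n-1$ carry a one-hot item marker, a middle position carries a counter, and a high block accumulates values. Concretely, for each item introduce two generators $\mathrm{in}_i \defeq a_i B^{2n} + B^{n} + B^{i-1}$ and $\mathrm{out}_i \defeq B^{n}+B^{i-1}$, and set the target to $t \defeq s\,B^{2n} + n\,B^{n} + \sum_{i=1}^n B^{i-1}$. Selecting, for each $i$, exactly one of $\mathrm{in}_i$ (``include'') or $\mathrm{out}_i$ (``exclude'') yields upper-right value $t$ precisely when the included $a_i$ sum to $s$, which gives the easy direction.

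The hard part will be the converse: since we work in a \emph{monoid}, each generator may be reused arbitrarily often, so a priori a combination $\sum_i(x_i\,\mathrm{in}_i+y_i\,\mathrm{out}_i)=t$ need not pick each item exactly once, and carries between blocks could conceivably forge the target. Neutralizing this reuse is the crux. The counter block forces the total multiplicity $\sum_i(x_i+y_i)$ to equal $n$; and once the markers have total multiplicity $n$ and sum to the all-ones pattern $\sum_i B^{i-1}$, the choice $B>2n$ makes the marker equation rigid: writing $w_i\defeq (x_i+y_i)-1\ge -1$ one has $\sum_i w_i=0$ and $\sum_i w_i B^{i-1}=0$, and since the $w_i<0$ are all $=-1$ we get $\sum_i|w_i|\le 2n<B$, so a largest-nonzero-index estimate forces every $w_i=0$, i.e.\ each item is used exactly once. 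The value block then reads off $\sum_{i\text{ included}}a_i=s$. The only quantitative care needed is to space the three blocks far enough apart that carries cannot cross them and to check that $B$ and all constructed numbers remain polynomial in the input (they use $O(n\log n+\log\sum_i a_i)$ bits), so that the construction is logspace-computable. I expect this carry-and-rigidity bookkeeping to be the only nontrivial step; everything else follows from the reduction of the first paragraph.
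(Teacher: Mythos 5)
Your opening reduction is correct and matches the paper's key observation: for generators $\UTM{1}{b_i}{1}$ the monoid is commutative, products add the top-right entries, and hence all three restricted problems (with the stated $\vec{x}$, $\vec{y}$) are equivalent to asking whether $t$ lies in $\{\sum_i k_i b_i : k_i \in \N\}$. But at this point the paper is essentially finished: it reduces from \emph{multi-subset-sum} (``Integer Knapsack'', [MP10] in Garey--Johnson), in which the coefficients already range over all of $\N$, so the unbounded reuse of generators in a monoid is matched exactly by the problem definition and no gadget is needed. You instead reduce from \textsc{Subset-Sum}, which forces you to build an ``exactly once'' gadget, and that gadget is where your proof fails.

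Concretely, your construction is not sound. Take $n=2$, so $B=5$, and the \textsc{Subset-Sum} instance $a_1=a_2=25$, $s=26$; this instance is negative (achievable sums are $0,25,50$) and non-degenerate ($s \le a_1+a_2$). Your generators include $\mathrm{out}_1 = B^2+1 = 26$ and $\mathrm{out}_2 = B^2+B = 30$, and your target is $t = 26\cdot B^4 + 2B^2 + B + 1 = 16306$. But $16306 = 626\cdot 26 + 30$, so $t$ \emph{is} a nonnegative combination of your generators (use $\mathrm{out}_1$ exactly $626$ times and $\mathrm{out}_2$ once), and the constructed instance is positive. The structural reason is that your ``out'' generators have coefficient $0$ in the most significant (value) block, so nothing in the target bounds their multiplicities; choosing a multiplicity of order $B^{2n}$ makes the counter block carry into the value block and forge the term $sB^{2n}$ (in general $(y_1-1)(B^n+1) = sB^{2n}$ is solvable in integers whenever $(B^n+1)\mid s$). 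This also shows why ``spacing the blocks far enough apart'' cannot repair the argument: whatever the spacing, the required multiplicity simply grows to match it, because it is not bounded by anything. Your claim that ``the counter block forces $\sum_i(x_i+y_i)=n$'' tacitly assumes no carries leave the counter block, which is precisely what unbounded reuse violates; so the rigidity argument never gets off the ground. Any correct gadget must first bound the total multiplicity --- for instance by giving \emph{every} generator a strictly positive coefficient in the most significant block so that the target's leading digit caps the number of factors --- or one can avoid the issue entirely by reducing from integer knapsack with repetition, as the paper does.
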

\begin{proof}
The following problem, \emph{multi-subset-sum}, is known to be \NP-complete; see the comment under ``[MP10] Integer Knapsack'' in~\cite{GJ79}: given a finite set $\{a_1, \ldots, a_k\} \subseteq \mathbb{N}$ and a value $t \in \mathbb{N}$, decide whether there exist coefficients $\alpha_1, \ldots, \alpha_k \in \mathbb{N}$ such that $\sum_{i=1}^k \alpha_i a_i = t$.
Given an instance of multi-subset-sum, construct
\[
 \M \ \defeq \ \left\{ \UTM{1}{a_i}{1} \;\middle\vert\; i \in \{1, \ldots, k\} \right\} \quad \text{and} \quad T \ \defeq \ \UTM{1}{t}{1}\,.
\]
Using the observation that $\UTM{1}{a}{1} \UTM{1}{b}{1} = \UTM{1}{a+b}{1}$ and that, hence, $\gen{\M}$ is commutative, it is straightforward to verify that the instance of multi-subset-sum is positive if and only if $T \in \gen{\M}$.
The proofs for vector reachability and $0$-reachability are similar.
\end{proof}

On various occasions, we make use of the notion of \emph{polynomial register machine}:
Let $\Z[x]$ denote the set of polynomials over~$x$ with integer coefficients.
A \emph{polynomial register machine (PRM)} is a tuple
$\Reg = (Q, \Delta, \lambda)$ where $Q$ is a finite set of \emph{states},
$\Delta \subseteq Q \times Q$ is the \emph{transition relation}, and $\lambda \colon \Delta \rightarrow \Z[x]$ is the \emph{transition labelling function}, labelling each transition with an \emph{update polynomial}. We write $q \trans{p(x)} q'$ whenever $(q, q') \in \Delta$ and $\lambda((q, q')) = p(x)$. The set $\C(\Reg)$ of \emph{configurations} of $\Reg$ is $\C(\Reg) \defeq Q \times \Z$. By $(\trans{}_\Reg) \subseteq \C(\Reg) \times \C(\Reg)$ we denote the \emph{one-step relation} given by
\begin{align*}
(q, a) \trans{}_\Reg (q', b) \quad \Longleftrightarrow \quad q \trans{p(x)} q' \text{ and } b = p(a).
\end{align*}
Let $\mathord{\trans{}^*_\Reg}$ be the reflexive-transitive closure of
$\trans{}_\Reg$.
The next theorem is the main result of~\cite{FGH}:
\begin{theorem}[\cite{FGH}]  \label{thm-fgh}
The following problem is \PSPACE-complete: given a PRM~$\Reg$ and two configurations $(q,a), (q', b) \in \C_\Reg$,
does $(q, a) \trans{}^*_\Reg (q', b)$ hold?
\end{theorem}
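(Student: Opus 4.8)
The statement bundles two directions: containment in \PSPACE\ and \PSPACE-hardness. I regard the upper bound as the technically delicate half, and I would attack it through the identity $\PSPACE = \mathsf{NPSPACE}$ (Savitch), reducing the task to a nondeterministic procedure that uses only polynomially many bits of work space.

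The first and central step is a \emph{boundedness lemma}: if $(q,a) \trans{}^*_\Reg (q',b)$ holds, then there is a witnessing run along which every register value has magnitude at most $2^{P(n)}$, where $n$ is the size of the input (in binary) and $P$ is a fixed polynomial; equivalently, intermediate values use only polynomially many bits. The justification rests on analysing how a polynomial $p \in \Z[x]$ acts on a large argument: for $|x|$ beyond the largest constant occurring in $\Reg$ and beyond $\max(|a|,|b|)$, a non-constant $p$ whose leading coefficient has absolute value $\ge 2$ or whose degree is $\ge 2$ strictly increases the magnitude, so the only transitions that can \emph{decrease} a large value are constant resets (degree $0$) and degree-$1$ maps $x \mapsto \pm x + c$, which change the magnitude by at most an additive constant. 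Consequently, any excursion of the register above the threshold $2^{P(n)}$ is either terminated by a constant reset, in which case the whole excursion can be excised from the run without affecting reachability, or it can only be ``unwound'' additively, which a shortest run can be shown never to need. (The squaring self-loop $x \mapsto x^2$ is instructive: reaching a target $b$ after $k$ squarings forces $b = a^{2^k}$, whose binary encoding already costs $2^k$ bits, so the run is only singly exponential in the input size.) Making this exchange argument watertight — in particular handling sign flips and descents through the degree-$1$ unit maps — is the main obstacle I anticipate.

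Granting the lemma, membership in \PSPACE\ is routine: the configurations whose register value fits in $P(n)$ bits form a graph with at most $|Q| \cdot 2^{P(n)+1}$ vertices, and reachability in such an (exponentially large but implicitly presented) graph is decided in nondeterministic polynomial space by guessing the run one transition at a time, storing only the current configuration together with a step counter bounded by the number of vertices. Savitch's theorem then yields the deterministic \PSPACE\ bound.

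For \PSPACE-hardness I would reduce from a canonical \PSPACE-complete problem, say acceptance of a polynomial-space Turing machine $M$ (equivalently, \textsf{TQBF}). A configuration of $M$ on a tape of length $s = \mathrm{poly}$ is encoded by storing the tape contents as a single integer $w = \sum_i \gamma_i \cdot k^i < k^s = 2^{\mathrm{poly}}$ in base $k = |\Gamma|$ in the register — note that this keeps values singly exponential, consistent with the boundedness lemma — while the finite control of $M$ and the head position (there are only polynomially many) are carried by the states $Q$ of the constructed $\Reg$. A step of $M$ that rewrites the scanned cell becomes an \emph{additive} register update $x \mapsto x + (\gamma' - \gamma) k^h$. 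Since the model has no register tests, the scanned symbol must be \emph{guessed} nondeterministically, and faithfulness of these guesses has to be enforced globally by demanding that the run reach one designated target configuration $(q_{\mathrm{acc}}, w^\ast)$. The crux of the reduction is to set up the encoding (choice of base, placement of digits, and a cleanly computable $w^\ast$) so that an inconsistent guess necessarily corrupts the encoding in a way from which the unique target value can never be recovered; guaranteeing this injectivity/no-carry property is the delicate point on the hardness side, paralleling the boundedness argument on the membership side.
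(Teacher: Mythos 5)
First, a point of comparison: the paper does not prove Theorem~\ref{thm-fgh} at all --- it is quoted from~\cite{FGH} and used as a black box (the paper's own contribution in Section~\ref{sec-affine-reach-pspace-hard} is to strengthen the \emph{lower} bound to affine updates). So your sketch must be judged against the known proof in the literature, and it has two independent, fatal gaps. The central \emph{boundedness lemma} of your upper bound is false. Consider states $q_0,\ldots,q_m$ with forced transitions $q_{i-1} \trans{x \mapsto x^2} q_i$ for $i \in \{1,\ldots,m\}$, a loop $q_m \trans{x \mapsto x-1} q_m$, start configuration $(q_0,2)$ and target $(q_m,0)$. The instance has size $O(m)$ and is a yes-instance, but the \emph{only} runs witnessing this pass through the value $2^{2^m}$ (which needs $2^m$ bits to write down) and then perform $2^{2^m}$ decrements. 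So intermediate values cannot be bounded by $2^{P(n)}$, your claim that a shortest run never needs to ``unwind additively'' from above the threshold is exactly wrong (here \emph{every} run needs to), and the derived nondeterministic search is incorrect as well, since it bounds both the stored values and the step counter by quantities this witness exceeds; your algorithm would wrongly reject this instance. Your parenthetical about repeated squaring misses precisely this configuration: a forced ascent \emph{followed by} a long additive descent. Given the counterexample, any correct proof of the upper bound cannot bound large values and must instead treat them symbolically once they leave a singly exponential window (e.g., retaining sign and congruence information, and exploiting that such values can re-enter the window only via constant resets or updates $x \mapsto \pm x + c$); this is the genuinely hard part of~\cite{FGH}, and it is the part your sketch defers.

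The hardness half is also unsalvageable as sketched, for a structural reason. All register updates in your reduction are additive, $x \mapsto x + (\gamma'-\gamma)k^h$. A PRM with exclusively additive updates is a one-dimensional $\Z$-VASS, and reachability for those is in \NP\ (its reachability relation is existential Presburger definable; see \cite{HaaseHalfon14}, used in Lemma~\ref{lemma_diag_one_np} of this very paper). Hence a faithful polynomial-time reduction from a \PSPACE-complete problem to such instances would prove $\NP = \PSPACE$; no choice of base, digit placement, or target $w^\ast$ can make it work. One can also see the failure concretely: additive updates commute, so the final register value depends only on the multiset of transitions used, and a cheating run that over-guesses the scanned symbol at one visit to head position $h$ and under-guesses by the same amount at a later visit to position $h$ incurs errors $\mp\delta k^h$ that cancel --- it still hits $w^\ast$, and the control, which never sees the register, cannot forbid this. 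What is missing is multiplicative \emph{amplification} of errors: this is exactly the mechanism of the paper's own affine hardness proof in Section~\ref{sec-affine-reach-pspace-hard}, where a guess $i$ is checked by $x \leftarrow (K+1)x$ followed by $x \leftarrow x - iK$, so that any inconsistent guess is thrown outside a window from which the invariant guarantees it can never return; the lower bounds of \cite{FGH} and \cite{Niskanen17} likewise rely on non-additive (there, genuinely polynomial) updates.
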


Restricting register values to positive numbers up to a given bound and update polynomials to simple increments/decrements leads to the notion of a \emph{bounded counter automaton}.
A bounded one-counter automaton can be specified as a tuple $(Q, b, \Delta)$ where
  \begin{itemize}
    \item $Q$ is a finite set of states,
    \item $b \in \N$ is a global counter bound,
    \item $\Delta$ is the transition relation containing tuples of the form $(q, p, q')$ where
    \begin{itemize}
      \item $q, q' \in L$ are predecessor/successor states,
      \item $p \in [-b, b]$ specifies how the counter should be modified.
    \end{itemize}
  \end{itemize}
  A \emph{configuration} of the automaton consists of a state $q \in Q$ and counter value $c$. We define the set of configurations to be $S = Q \times [0,b]$. For two configurations $(q, c), (q', c')$ we write $(q, c) \trans{} (q', c')$ whenever $(q, p, q') \in \Delta$ for some $p \in \Z$ such that $c' = c + p \in [0, b]$.

By $\trans{*}$ we denote the reflexive-transitive closure of the relation $\trans{}$.

\begin{theorem}[ \cite{fearnley2015reachability}] \label{thm-counter-reach}

The reachability problem for bounded one-counter automata is \PSPACE-complete. This is the following problem: given a bounded one-counter automaton $(Q, b, \Delta)$, a state $q_0$, and a configuration $(q, c) \in S$, does $(q_0, 0) \trans{*} (q, c)$ hold?

\end{theorem}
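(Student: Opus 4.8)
The plan is to establish the two directions of the \PSPACE-completeness claim separately.

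For membership in \PSPACE, the key point is that, although the bound $b$ is written in binary, so that the configuration set $S = Q \times [0,b]$ has $|Q|\cdot(b+1)$ elements and is exponential in the size of the input, each single configuration $(q,c)$ is storable in $O(\log|Q| + \log b)$ bits, i.e.\ polynomial space, and one can check in polynomial time whether $(q,c) \trans{} (q',c')$, since this only amounts to verifying some $p$ with $(q,p,q') \in \Delta$ and $c' = c+p \in [0,b]$. I would therefore run the standard nondeterministic reachability procedure that guesses a path from $(q_0,0)$ to $(q,c)$ one configuration at a time, holding in memory only the current configuration together with a step counter bounded by $|S| \le |Q|\cdot(b+1)$; the latter needs only polynomially many bits, so the procedure runs in nondeterministic polynomial space, and Savitch's theorem places it in \PSPACE. (Equivalently, one applies Savitch's recursive ``reachable in at most $2^k$ steps'' predicate directly at recursion depth $O(\log|S|)$, which is polynomial.)

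For \PSPACE-hardness I would reduce from a canonical \PSPACE-complete problem, say acceptance of a deterministic Turing machine $M$ that on input $w$ of length $n$ uses at most $s = \mathrm{poly}(n)$ tape cells (one could equally reduce from the truth problem for quantified Boolean formulas). The idea is to exploit that $b$ may be taken exponential in $n$, so that the single bounded counter can hold the entire $s$-cell tape of $M$ encoded as a number in base $|\Gamma|$, while the finite control tracks the control state of $M$ and the head position, both ranging over only polynomially many values. A step of $M$ is then simulated by a gadget that inspects the base-$|\Gamma|$ digit at the current head position, branches on it and on $M$'s state, and rewrites that digit and moves the head. The essential primitives are the \emph{threshold tests} $c \ge k$ and $c \le k$ for fixed constants $k \in [0,b]$: each is realized by a transition whose in-range guard $0 \le c+p \le b$ holds exactly when the test holds (for instance, adding $p = b-k$ is enabled iff $c \le k$), so that together with the finite control these guards let the automaton branch on stored data.

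The hard part will be reading and rewriting the digit at the current head position \emph{without destroying} the rest of the counter: a bare threshold comparison sees the whole value $c$ rather than an isolated digit, and testing the $i$-th digit amounts to the modular condition $c \bmod |\Gamma|^{i+1} \in [v\,|\Gamma|^i,\,(v+1)|\Gamma|^i)$ rather than a threshold on $c$, which a single bounded counter cannot obviously check while preserving the high-order part. My first attempt would fix the head position in the control state (so that $|\Gamma|^i$ is a constant) and implement the read as a short reversible sequence of guarded additions and subtractions of multiples of $|\Gamma|^i$, coordinated with auxiliary states enumerating the finitely many digit values, the overwrite being a single guarded addition of a constant multiple of $|\Gamma|^i$. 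Should nondestructive extraction from a static counter value prove impossible with one counter, the fallback is a cleverer encoding that threads the tape through the \emph{trajectory} of counter values rather than storing it statically, so that each digit is exposed exactly when needed. Verifying faithfulness, the in-range behaviour of every guard, and polynomial size of the construction is where the bulk of the work lies; granting it, $(q_0,0)\trans{*}(q,c)$ for the designated accepting configuration holds iff $M$ accepts $w$, and since the automaton is computable in logarithmic space, this yields \PSPACE-hardness and completes the proof.
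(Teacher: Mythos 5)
First, note that the paper does not prove this statement at all: Theorem~\ref{thm-counter-reach} is imported verbatim from the cited work of Fearnley and Jurdzi\'nski \cite{fearnley2015reachability}, and the paper only uses it as a black box in the hardness proof of Section~\ref{sec-affine-reach-pspace-hard}. So the comparison can only be against that external proof, and against correctness itself.

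Your upper-bound argument is correct and standard: a configuration $(q,c)$ fits in polynomially many bits, the one-step relation is polynomial-time checkable, and guessing a path of length at most $|Q|\cdot(b+1)$ while storing only the current configuration and a step counter gives an \textsf{NPSPACE} procedure, hence \PSPACE{} by Savitch. No issues there.

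The hardness direction, however, has a genuine gap, and it is exactly the one you flag yourself. Encoding the whole $s$-cell tape of a Turing machine as a base-$|\Gamma|$ number in the counter requires, at each simulated step, reading and rewriting the digit at the current head position \emph{without losing the rest of the counter}. With only additive updates and the implicit guard $c+p \in [0,b]$, the primitives you actually have are threshold tests $c \le k$ and $c \ge k$ (your realization of these is fine, since they can be undone by adding back the offset); but a digit read is a modular condition, not a threshold condition. The natural way to compute $c \bmod |\Gamma|^{i+1}$ — repeatedly subtracting $|\Gamma|^{i+1}$ until a threshold test fails — destroys the high-order part of $c$, and restoring it would require remembering the number of subtractions, which is exponential information that neither the counter nor the polynomial-size control can hold. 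So your ``first attempt'' (a short reversible sequence of guarded additions/subtractions coordinated by finitely many auxiliary states) does not work: no such finite gadget can distinguish counter values that agree on thresholds but differ in an interior digit while leaving the counter intact. Your fallback — threading the tape through the trajectory of counter values rather than storing it statically — is indeed the right general direction (destructive reads can be tolerated if the computation's structure regenerates or verifies the encoded data), but you give no construction, and making this work is precisely the technical core of \cite{fearnley2015reachability}. It is worth appreciating that the \PSPACE-hardness of this problem was an open question resolved by that paper with a substantially more involved reduction; it is not obtainable by a direct tape-in-counter simulation plus routine gadget engineering. As it stands, the proposal proves the upper bound and correctly diagnoses, but does not close, the lower bound.
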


%!TEX root = main.tex
\section{Reachability in Affine Register Machines}  \label{sec-affine-reach-pspace-hard}
In this section, we show that the reachability problem for PRMs is \PSPACE-complete even when the register updates are restricted to affine functions. We call such PRMs \emph{affine register machines}. We show that reachability in affine register machines is \PSPACE-complete via a reduction from the reachability problem for bounded one-counter automata.

\begin{theorem}
  The following problem is \PSPACE-complete: Given an affine register machine, and configurations $(q, x), (r, y)$, does $(q, x) \trans{*} (r, y)$ hold?
\end{theorem}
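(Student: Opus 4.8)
The plan is to prove the two directions separately. For membership in \PSPACE, observe that an affine register machine is exactly a PRM in which every update polynomial has degree at most one; hence reachability between two configurations is a special case of the problem in Theorem~\ref{thm-fgh}, and membership in \PSPACE\ is immediate. All the remaining work goes into the matching lower bound.

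For \PSPACE-hardness I would reduce from reachability in bounded one-counter automata (Theorem~\ref{thm-counter-reach}). The natural correspondence keeps the counter value $c$ in the single integer register and mirrors the finite control: each state of the counter automaton becomes a state (possibly a small gadget) of the affine register machine, the initial configuration $(q_0,0)$ maps to register value~$0$, and the target $(q,c)$ maps to register value~$c$. The crucial point is that the native updates of a bounded counter automaton, namely $x := x + p$ for $p \in [-b,b]$, are already affine, so a transition $(q,p,q')$ can be simulated at its core by the affine update $x := x+p$ together with the control move $q \to q'$. This is why bounded one-counter automata, rather than general PRMs, are the right source: a polynomial update such as $x := x^2$ cannot be decomposed into affine updates, since compositions of affine maps are affine.

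The one feature that does not translate directly, and which I expect to be the main obstacle, is the \emph{guard} $c' = c+p \in [0,b]$ built into the counter automaton's transition relation. Affine register machines have no zero tests, and this guard is essential: an unguarded integer counter is much weaker than a bounded one (its reachability relation is definable in Presburger arithmetic), so the entire \PSPACE-hardness must come from faithfully enforcing the box $[0,b]$. My plan is to exploit the extra expressive power of affine updates beyond pure counting, namely multiplication by constants, in order to make boundary violations irreversible. Concretely, I would encode the register in a positional scheme with a large base $B > b$, letting the counter occupy a low-order window and reserving a high-order ``error'' component that is initially and finally~$0$; the gadgets would be designed so that any step pushing $c$ outside $[0,b]$ instead multiplies the error component up, whereas in-range steps leave it untouched. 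Since the register can only be multiplied up and additively nudged, a nonzero error can never be cleared, so it survives to the end and blocks reaching the exact target value, which encodes error~$0$.

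The technical heart of the argument is then the two-way faithfulness of this simulation: every bound-respecting run of the counter automaton is matched by a run of the affine register machine that keeps the error component at~$0$ and reaches the target, and conversely any run of the affine register machine that reaches the target must have kept the error at~$0$ throughout, hence corresponds to a legal, in-range run. Verifying the second direction---that no sequence of affine updates can both leave the box and still recover the exact target---is where the choice of base $B$ and of the multiplicative constants must be made precise, and is the step I expect to require the most care. Finally, all constants (including $B$ and $b$, written in binary) have polynomial bit-length, so the reduction is computable in logarithmic space, and together with the upper bound this yields \PSPACE-completeness.
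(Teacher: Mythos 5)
Your upper bound is correct (it is exactly the paper's: affine machines are a special case of PRMs, so Theorem~\ref{thm-fgh} applies), and your choice of source problem for hardness---reachability in bounded one-counter automata, Theorem~\ref{thm-counter-reach}---is also the paper's. Your guiding philosophy (make any bound violation create an ``error'' that multiplication can only amplify, never cancel, so that it blocks hitting the exact target) is likewise the right one. But there is a genuine gap at precisely the step you defer: you never say how a gadget of affine updates can tell whether a step pushes $c$ outside $[0,b]$, and this cannot be postponed, because it is the entire content of the lower bound. Affine register machines have no tests, so the control flow is independent of the register value; along any fixed path the machine applies one fixed composition of affine maps, i.e., a single affine map, to whatever the register holds. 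A fixed affine map $x \mapsto \alpha x + \beta$ is monotone and acts on the whole integer, not on ``digits'' of a positional encoding: it cannot leave a low-order window untouched while ``multiplying up'' a high-order error component, nor can it treat $[0,b]$ and its complement differently. So the gadget you describe cannot exist as stated, and the two-way faithfulness argument you call the technical heart has nothing to rest on.

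The missing idea---the crux of the paper's proof---is to use \emph{nondeterministic guessing as the range check}: to simulate a transition, the machine guesses $i \in [0,b]$ (a claim that the current register value equals~$i$) and applies $x \mapsto (K+1)x - iK$ with $K = 2b+1$. This map is the identity when $x = i \in [0,b]$, and it sends every $x \neq i$ outside $[-b,2b]$; since the remaining updates only add some $p \in [-b,b]$ or apply further maps of this form, a value that has left $[0,b]$ can never return. This is exactly the irreversibility you wanted, but it is triggered by a wrong (or impossible) guess rather than by ``detecting'' a violation, and requiring the guess to lie in $[0,b]$ is what enforces the counter bound. Your sketch would also need to handle a second issue the paper addresses: $b$ is given in binary, so one cannot afford one transition per candidate guess $i$; the paper first pads the bound to $B = 2^j-1$ and then guesses the $j$ binary digits of $i$ one at a time (at step $k$, either subtract $2^k K$ or do nothing), keeping the constructed machine of polynomial size.
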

\label{thm-pspace-cmp}
\begin{proof}
Membership in \PSPACE\ follows from Theorem~\ref{thm-fgh}. %membership in \PSPACE\ for general PRMs~\cite{FGH}.
It remains to show that the problem is \PSPACE-hard.
Fix a bounded one-counter automaton $\A = (Q, b, \Delta)$. We give a polynomial-time construction of affine register machine $\Reg$ such that $(q_0, 0) \trans{*}_\A (q, c_{\text{tgt}})$ holds for some configurations $(q_0, 0), (q, c_\text{tgt})$ of $\A$ if and only if $(r_0, 0) \trans{*}_\Reg (r, c_\text{tgt})$ holds for some distinct states $r_0$ and $r$ of $\Reg$.

Let $i \in [0, b]$ and $c \in \Z$, and define:
\begin{align*}
K & \defeq 2b + 1 &
K(i, c) & \defeq  (K + 1) c - i \cdot K.
\end{align*}

It can be shown that the following implications hold:
\begin{align}
   i \neq c  \quad\Longrightarrow\quad &  K(i, c)  \not \in [-b, 2b]  \label{impl:one}\\
   i = c \quad\Longrightarrow\quad & K(i, c) = i = c \in [0,b] \label{impl:two}
\end{align}
The (relatively straightforward) derivation of these implications can be found in Appendix~\ref{app:derivation-imp-thm}.

Implications $(\ref{impl:one})$ and $(\ref{impl:two})$ suggest the following (tentative) construction of the PRM $\Reg$ with affine updates: $\Reg$ stores the counter value of the bounded one-counter automaton $\A$ in its register $x$, and it stores the state of $\A$ in its state. When $\Reg$ simulates a transition $(q, c) \trans{} (q', c')$ due to $(q, p, q') \in \Delta$, it does the following:
It guesses $i \in [0, b]$, and performs the updates $x \leftarrow (K+1) \cdot x$, followed by the update $x \leftarrow x -i\cdot K$. If $i = c$ and $x \in [0, b]$, then by $(\ref{impl:two})$ the register value remains unchanged in the interval $[0, b]$; otherwise the updates result in a value outside the interval $[-b, 2b]$ by $(\ref{impl:one})$. Finally, $\Reg$ performs the update $x \leftarrow x + p$ and transitions to state $q'$. Since $p \in [-b, b]$, our sequence of updates maintains the following invariant: once the register value $x$ lies outside the interval $[0, b]$, it remains so forever. Moreover, if the target state $(q, c_\text{tgt})$ is reachable from $(q_0, 0)$ in the counter machine, then we have a corresponding sequence in the affine register machine, and vice versa.

There is one caveat: representing all possible guesses of $i=0,1, \ldots, b$ directly in the transition relation of $\Reg$ would not be polynomial. However, these nondeterministic updates can be represented more succinctly with a slight modification: Let $j = \lceil \log b \rceil + 1$. We first transform the counter machine $\A$ into an equivalent machine $\A'$ with counter bound $B = 2^j - 1 \geq b$ by replacing every transition $(q, p, q') \in \Delta$ with three transitions $(q, p, q'_1)$, $(q'_1, (B-b), q'_2)$, and $(q'_2, -(B-b), q')$ (where $q'_1$ and $q'_2$ are auxiliary intermediate locations).
 Observe that by construction of $\A'$, $(q_0, 0) \trans{*}_\A (q, c)$ holds if and only if $(q_0, 0) \trans{*}_{\A'} (q, c)$. Moreover, the size of $\A'$ is polynomial in the size of $\A$. In order to prove our hardness result, it thus suffices to construct a PRM $\Reg$ of size polynomial in the size of $\A'$, such that  $(q_0, 0) \trans{*}_{\A'} (q, c)$ holds if and only if $(q_0, 0) \trans{*}_\Reg (q, c)$ holds.
 To this end, apply the previous construction of the PRM to $\A'$, but instead of guessing $i \in [0, B]$ directly and computing $x \leftarrow x -i\cdot K$, the PRM uses intermediate auxiliary states $r_0, \ldots r_{j}$, and transitions
 \begin{align*}
  & r_k \trans{x \leftarrow x - 2^k \cdot K} r_{k+1} \qquad \text{ (decrement)} &&
  & r_k \trans{x \leftarrow x} r_{k+1} \qquad \text{ (no update)}
 \end{align*}
 for every $k \in [0, j-1]$. Thus, instead of guessing $i$ and subsequent decrementation of~$x$ by $i \cdot K$, the machine guesses the $j$ binary digits of $i$ in increasing order, and updates the register accordingly after each guess.
 These nondeterministic choices of binary digits represent all updates for values of $i$ in the range $[0, B]=[0, 2^j-1]$.
 The number of states of the resulting PRM is polynomial in the size of the reachability query for $\A'$. This completes the proof.
 \end{proof}

\section{Mortality} \label{sec-SL}

In this section we consider the \emph{mortality} problem: given a finite set $\M \subseteq \Z^{2 \times 2}$ of integer matrices (not necessarily triangular), is the zero matrix~$\zero$ in $\gen{\M}$?
In the upper-triangular case the problem is almost trivial: if there is $M \in \gen{\M}$ with only zeros on the diagonal, there must be $M_1, M_2 \in \M$ with $M_1 = \UTM{0}{b}{c}$ and $M_2 = \UTM{a'}{b'}{0}$---but then $M_1 M_2 = \zero$.
We consider mortality for matrices with determinants $+1, 0$ and prove:
\begin{theorem} \label{thm-mortality}
The mortality problem for two-dimensional integer matrices (not necessarily triangular) with determinants $+1$ or~$0$ is \NP-complete.
It is \NP-hard even if there is one singular matrix and the non-singular matrices are of the form $\UTM{1}{a}{1}$. % for $a \in \N$.
\end{theorem}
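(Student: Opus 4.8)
The plan is to prove the two halves of \NP-completeness separately, beginning with \NP-hardness, since the same reduction already yields the stated restricted form.

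For \textbf{\NP-hardness} I would reduce from multi-subset-sum, reusing the commutative gadget from Proposition~\ref{prop-NP-hardness}. Given $\{a_1,\dots,a_k\}$ and a target $t$, set $\M \defeq \{\UTM{1}{a_i}{1} \mid i \in \{1,\dots,k\}\} \cup \{S\}$ with the single singular matrix
\[
 S \ \defeq \ \begin{pmatrix} -t & 0 \\ 1 & 0\end{pmatrix}, \qquad \det S = 0 .
\]
All non-singular generators have determinant $1$ and the required shape, so this is a legal restricted instance. The decisive computation is $S \, \UTM{1}{s}{1} \, S = (s-t)\,S$, which equals $\zero$ exactly when $s=t$. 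Hence, if $\sum_i \alpha_i a_i = t$ for some $\alpha_i \in \N$, then $\prod_i \UTM{1}{a_i}{1}^{\alpha_i} = \UTM{1}{t}{1}$ and $S \, \UTM{1}{t}{1} \, S = \zero \in \gen{\M}$.

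For the converse I would analyze an arbitrary product equal to~$\zero$. Writing $S = u v^T$ with $u = \Ve{-t}{1}$ and $v = \Ve{1}{0}$, such a product must contain $S$ at least twice: a product of the determinant-$1$ generators is invertible, and a single $S$ flanked by invertible factors has rank one, hence is nonzero. Grouping a zero product as $A_0 S A_1 S \cdots S A_r$ with each $A_i = \UTM{1}{s_i}{1}$ a product of the non-singular generators, the telescoping identity $S A_i S = (v^T A_i u)\,S = (s_i-t)\,S$ collapses it to $\bigl(\prod_{i=1}^{r-1}(s_i-t)\bigr)\,A_0 S A_r$. Since $A_0 S A_r$ always has a $1$ in its bottom-left entry, the product vanishes iff some $s_i = t$, i.e.\ iff $t$ is a nonnegative integer combination of the $a_i$. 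This establishes \NP-hardness in the claimed restricted form.

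For \textbf{membership in \NP} I would first dispose of the trivial case $\zero \in \M$, so that every singular generator is a rank-one matrix $u v^T$ with $u,v \neq \zero$. Applying the same telescoping collapse to general rank-one factors $S_j = u_j v_j^T$ shows that $\zero \in \gen{\M}$ iff there exist two singular generators $S = u v^T$ and $S' = u' v'^T$ and a product $A$ of the determinant-$1$ generators with $v^T A u' = 0$: indeed $S A S' = (v^T A u')\,u\,v'^T$, and (by the rank argument, forcing at least two singular factors) any zero product reduces to one such vanishing connecting scalar. Thus mortality reduces to $O(|\M|^2)$ instances of $0$-reachability inside the sub-monoid generated by the determinant-$1$ matrices.

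It remains to place each such $0$-reachability instance in \NP. The condition $v^T A u' = 0$ says that $A u'$ lies on the line $v^\perp$; dividing out the gcd of the entries of $u'$ we may assume $u'$ is primitive, so that $A u'$ is primitive (as $A \in \SL$), whence $v^T A u' = 0$ forces $A u' = \pm p$ for the primitive generator $p$ of $v^\perp \cap \Z^2$. This converts the scalar condition into two vector-reachability queries for determinant-$1$ matrices. The main obstacle is certifying these in polynomial size: I would extract the required polynomial-size witnesses from the structure underlying decidability of vector reachability~\cite{PotapovSemukhin16}, combined with the \NP\ certificate machinery for the determinant-$1$ membership/identity problem~\cite{BellIdentity17}; this is the technically delicate step of the argument.
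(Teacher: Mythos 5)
Your \NP-hardness argument is correct, complete, and in essence the paper's reduction: the paper uses the singular matrix $T \defeq \Ve{0}{1}\begin{pmatrix}1 & -t\end{pmatrix}$ where you use $S \defeq \begin{pmatrix}-t & 0 \\ 1 & 0\end{pmatrix}$, and it invokes Lemma~\ref{lem-Bournez} from \cite{Bournez2002} where you re-derive the same fact by hand via the telescoping identity $S A S = (v^T A u)S$ and the observation that $A_0 S A_r$ has a $1$ in its bottom-left corner; both routes are sound. The skeleton of your upper bound also matches the paper's: restrict to rank-one singular generators, use the rank argument to see that a vanishing product collapses to a single connecting scalar $v^T A u' = 0$ with $A \in \gen{\M'}$ for $\M' \defeq \M \cap \SL$, and use gcd-preservation (Lemma~\ref{lem-PotapovSemukhin16}) to replace that scalar condition by the two vector-reachability queries $A u' = \pm p$.

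The genuine gap is the step you yourself flag as "technically delicate," and it is precisely where the real work of the \NP\ upper bound lies: you never show that the query "is there $A \in \gen{\M'}$ with $A u' = \pm p$" has polynomial-size certificates. This does not follow routinely from \cite{PotapovSemukhin16}, which yields \emph{decidability} of vector reachability with no polynomial witness bound; your query has two unbounded existentials — the product $A \in \gen{\M'}$, whose shortest factorization need not be polynomially long, and (after applying Theorem~\ref{thm-PotapovSemukhin16}) the exponent $k$ in $A = B U^k C$, where $U \defeq \begin{pmatrix}1 & 1 \\ 0 & 1\end{pmatrix}$. The paper's resolution is a specific folding trick: rewrite the query as "$\exists k \in \Z\ \exists A \in \gen{\M'}:\ B U^k C A^{-1} \in \{I, -I\}$" and observe that the matrices $B U^k C A^{-1}$ are exactly the elements of the regular expression $B\,(U \cup U^{-1})^*\,C\,(A_1^{-1} \cup \cdots \cup A_m^{-1})^*$, where $\{A_1, \ldots, A_m\} = \M'$; intersection of such a set with $\{I,-I\}$ is decidable in \NP\ by Theorem~\ref{thm-BellIdentity17} (\cite{BellIdentity17}), whose machinery supplies the witness compression you are missing. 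Note that this folding is legitimate only because every nonsingular generator has determinant $+1$, so its inverse is again in $\SL$, and because testing against $\{I,-I\}$ rather than $\{I\}$ absorbs the sign ambiguity that also appears in your $A u' = \pm p$ formulation. Without this construction (or an equivalent one), your claim of membership in \NP\ is unproven.
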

Both for the lower and the upper bound we need the following lemma from \cite{Bournez2002}:
\begin{lemma}[{\cite[Lemma~2]{Bournez2002}}] \label{lem-Bournez}
Let $\M \subseteq \R^{2 \times 2}$ be a finite set of matrices.
We have $\zero \in \gen{\M}$ if and only if there are $M_1, \ldots, M_n \in \M$ with $M_1 \cdots M_n = \zero$ and $\rk(M_1) = \rk(M_n) < 2$ and $\rk(M_i) = 2$ for all $i \in \{2, \ldots, n-1\}$.
\end{lemma}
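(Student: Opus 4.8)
The easy direction is "$\Leftarrow$": if such matrices $M_1, \ldots, M_n$ exist with product $\zero$, then trivially $\zero \in \gen{\M}$, since $\gen{\M}$ is closed under products and contains each generator. So the whole content is in the forward direction "$\Rightarrow$": given that $\zero \in \gen{\M}$, I must produce a product equal to $\zero$ whose factors have the stated rank pattern — rank $<2$ at the two ends, rank $2$ strictly in between.

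**First I would take an arbitrary factorization of $\zero$ and normalize it.** Suppose $\zero = N_1 N_2 \cdots N_m$ with each $N_i \in \M$. The key structural fact about rank under multiplication is that $\rk(AB) \le \min\{\rk(A), \rk(B)\}$, so rank is nonincreasing along any prefix of a product; since the final product has rank $0 < 2$, at least one factor must have rank $<2$ (a full-rank product of full-rank $2\times 2$ matrices is again full rank). The aim is to isolate a minimal-length contiguous block of this product that already multiplies to $\zero$ and whose internal factors are all invertible. Concretely, I would look at the leftmost singular factor and the rightmost singular factor in the chosen product and argue that the segment between them (inclusive) can be taken to satisfy the lemma.

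**The main technical step is controlling the rank profile of a minimal annihilating product.** Among all products from $\M$ equal to $\zero$, pick one of minimal length, say $M_1 \cdots M_n = \zero$. Minimality should force the endpoints to be the singular factors and everything strictly inside to be full rank: if some interior factor $M_i$ ($1 < i < n$) were singular, I would argue using Lemma-style rank bounds and the $2\times2$ geometry that one could split off a shorter subproduct also equal to $\zero$, contradicting minimality. For $2\times2$ matrices the crucial geometric fact is that a singular nonzero matrix has a one-dimensional image and a one-dimensional kernel; a product of matrices equals $\zero$ precisely when the accumulated image gets annihilated, and this happens exactly at a transition where the running image (a line) lands in the kernel of the next factor. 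A minimal such product cannot contain an interior rank drop without admitting a shorter witness, because the first interior singular factor already collapses the image to a line, and that line must eventually be killed — allowing truncation on the right and hence a shorter product. Finally I would confirm $\rk(M_1) = \rk(M_n)$: both are $<2$, and since the interior factors are invertible (rank $2$), the ranks of $M_1$ and $M_n$ equal the rank of the whole left-to-right image collapse at each end, which for a $2\times2$ annihilating product forces both endpoint ranks to coincide (they are both exactly $1$, as a rank-$0$ endpoint would already be the zero matrix, giving a length-one witness).

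**I expect the main obstacle to be the rank-equality and interior-full-rank claims,** i.e., verifying that minimality genuinely rules out interior singular factors and pins both endpoint ranks to the same value. The inequality $\rk(AB) \le \min\{\rk(A),\rk(B)\}$ only gives monotonicity; to get the precise profile one must exploit the low-dimensional structure (images and kernels are points/lines in a two-dimensional space) and argue that an interior collapse lets you excise a proper subproduct that still annihilates, which is the delicate combinatorial-geometric heart of the argument. Since this is quoted as Lemma~2 of \cite{Bournez2002}, I would present the reduction to these $2\times2$ rank observations and cite the source for the detailed verification rather than re-derive every case.
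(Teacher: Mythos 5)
First, a point of comparison: the paper itself gives no proof of this lemma at all --- it is quoted verbatim from Bournez and Branicky (Lemma~2 of the cited work) --- so there is no in-paper argument to measure your attempt against, and your plan of reducing everything to $2\times 2$ rank geometry and citing the source for the details is, in that sense, no less than what the paper does. Moreover, your skeleton is exactly the standard one: take a shortest product equal to $\zero$, force the endpoints to be singular by cancelling invertible end factors, force the interior factors to be nonsingular by exhibiting a shorter annihilating subproduct, and conclude that both endpoints have rank exactly $1$ (or $n=1$ with rank $0$), hence equal rank.

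However, the step carrying all the weight has a genuine gap as you state it. You justify the interior claim by: the interior singular factor collapses the running image to a line, that line must eventually be killed, ``allowing truncation on the right.'' A single truncation direction cannot cover all cases. The clean dichotomy comes from writing the interior singular factor as $M_i = u v^T$ with $u,v \neq 0$ (if $M_i = \zero$ it alone is a length-one witness). Then $\zero = M_1\cdots M_n = (M_1\cdots M_{i-1}u)\,(v^T M_{i+1}\cdots M_n)$ is an outer product of a column and a row, so it vanishes if and only if one of the two vectors vanishes, and the two cases require truncations on \emph{opposite} sides: if $M_1\cdots M_{i-1}u = 0$, the prefix $M_1\cdots M_i = (M_1\cdots M_{i-1}u)v^T = \zero$ is a shorter witness (length $i<n$, truncation on the right); if instead $v^T M_{i+1}\cdots M_n = 0$, the suffix $M_i\cdots M_n = u(v^T M_{i+1}\cdots M_n) = \zero$ is a shorter witness (length $n-i+1<n$, using $i>1$, truncation on the left). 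The second case is not covered by your argument and does occur: with $M_1 = M_2 = \begin{pmatrix}1 & 0\\ 0 & 0\end{pmatrix}$ and $M_3 = \begin{pmatrix}0 & 0\\ 1 & 1\end{pmatrix}$ we have $M_1M_2M_3 = \zero$, the interior factor $M_2$ is singular, yet no proper prefix vanishes --- only the suffix $M_2M_3$ does, because here the running image lands in $\ker(M_2)$ at the interior factor itself, so the ``line'' your argument relies on is already $\{0\}$ there. Once you replace the one-sided truncation by this two-case analysis, the rest of your sketch (endpoints singular via cancelling inverses, both of rank exactly $1$ since a rank-$0$ factor would be a length-one witness, hence $\rk(M_1)=\rk(M_n)$) goes through, so the gap is real but local and fixable.
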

First we prove \NP-hardness:
\begin{proof}[Proof of the \NP-hardness part of Theorem~\ref{thm-mortality}]
Concerning \NP-hardness, the reduction from Proposition~\ref{prop-NP-hardness} for $0$-reachability constructs, given an instance of multi-subset-sum, a set~$\M$ of matrices of the form $\UTM{1}{a}{1}$ (hence of rank~$2$) and a number $t \in \N$ such that the instance of multi-subset-sum is positive if and only if there is $M \in \M$ with $\begin{pmatrix} 1 & -t \end{pmatrix} M \VVe{0}{1} = 0$.
Define the rank-$1$ matrix
$
 T  \defeq  \Ve{0}{1} \begin{pmatrix} 1 & -t \end{pmatrix} = \begin{pmatrix}0 & 0 \\ 1 & -t\end{pmatrix}
$
and set $\M' \defeq \M \cup \{T\}$.
If there is $M \in \M$ with $\begin{pmatrix} 1 & -t \end{pmatrix} M \Ve{0}{1} = 0$, then $\zero = T M T \in \gen{\M'}$.
Conversely, if there is $\zero \in \gen{\M'}$, then, by Lemma~\ref{lem-Bournez}, there is $M \in \M$ with $T M T = \zero$, hence $\begin{pmatrix} 1 & -t \end{pmatrix} M \Ve{0}{1} = 0$.
\end{proof}
We remark that this \NP-hardness proof subsumes the main result of~\cite{BellHP12}, which is slightly weaker in that it allows also for matrices with determinant~$-1$.
For the upper bound we use results from \cite{PotapovSemukhin16,BellIdentity17}.
As usual, define $\SL \defeq \{M \in \Z^{2 \times 2} \mid \det(M) = 1\}$.
\begin{lemma}[{\cite[Lemma~4]{PotapovSemukhin16}}] \label{lem-PotapovSemukhin16}
Let $\vec{x} = \Ve{x_1}{x_2} \in \Z^2$ and $\vec{y} = \Ve{y_1}{y_2} \in \Z^2$ and $M \in \SL$.
If $M \vec{x} = \vec{y}$ then $\gcd(x_1, x_2) = \gcd(y_1, y_2)$.
\end{lemma}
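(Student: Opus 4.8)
The plan is to use only two facts: that $M$ has integer entries, and that $\det(M) = 1$ makes $M$ invertible over~$\Z$ with $M^{-1}$ again lying in~$\SL$. The whole argument is a single divisibility observation applied in both directions, so I expect it to be short.

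First I would establish $\gcd(x_1,x_2) \mid \gcd(y_1,y_2)$. Writing $M = \begin{pmatrix} a & b \\ c & e \end{pmatrix}$ with $a,b,c,e \in \Z$, the equation $M\vec{x} = \vec{y}$ reads $y_1 = a x_1 + b x_2$ and $y_2 = c x_1 + e x_2$, so each coordinate of $\vec{y}$ is an integer linear combination of $x_1$ and $x_2$. Hence every common divisor of $x_1$ and $x_2$ also divides both $y_1$ and $y_2$; in particular $\gcd(x_1,x_2)$ divides $\gcd(y_1,y_2)$.

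For the reverse divisibility I would invoke the defining property of~$\SL$. Since $\det(M) = 1$, the inverse $M^{-1} = \begin{pmatrix} e & -b \\ -c & a \end{pmatrix}$ again has integer entries and again lies in~$\SL$. From $M^{-1}\vec{y} = \vec{x}$ the same computation yields $\gcd(y_1,y_2) \mid \gcd(x_1,x_2)$. Combining the two divisibilities gives $\gcd(x_1,x_2) = \gcd(y_1,y_2)$, as claimed.

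I do not anticipate a genuine obstacle: the lemma is elementary once one notices that integrality of $M^{-1}$ is exactly what the determinant-one hypothesis buys. The one point requiring care is that this integrality of the inverse -- and hence the symmetric treatment of $\vec{x}$ and $\vec{y}$ -- relies on $\det(M) \in \{+1,-1\}$; for a general integer matrix only the forward divisibility would survive, and the equality could fail.
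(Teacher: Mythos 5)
Your proof is correct. Note that the paper does not prove this lemma at all---it imports it verbatim from Potapov and Semukhin \cite{PotapovSemukhin16}---so there is no in-paper argument to compare against; your two-sided divisibility argument (integer linear combinations give $\gcd(x_1,x_2)\mid\gcd(y_1,y_2)$, and integrality of $M^{-1}$ for $M \in \SL$ gives the converse) is the standard proof of this fact, and your closing remark correctly identifies $\det(M)\in\{+1,-1\}$ as exactly the hypothesis that makes the argument symmetric.
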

\begin{theorem}[{\cite[Theorem~8, Corollary~9]{PotapovSemukhin16}}] \label{thm-PotapovSemukhin16}
Let $\vec{x}, \vec{y} \in \Z^2$ with $\vec{x} \neq \vec{0}$.
Then one can compute in polynomial time matrices $B, C \in \SL$ such that for every $M \in \SL$ the following equivalence holds: \quad $
 M \vec{x} = \vec{y} \ \Longleftrightarrow \ \text{there is $k \in \Z$ with $M = B \begin{pmatrix} 1 & 1 \\ 0 & 1 \end{pmatrix}^k C$}.$
\end{theorem}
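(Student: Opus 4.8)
The plan is to describe the full solution set $\{M \in \SL : M\vec{x} = \vec{y}\}$ as a single coset of the stabilizer of~$\vec{x}$ under the (left) $\SL$-action, and to exploit that this stabilizer is, up to conjugation, exactly the cyclic group of powers of $U \defeq \UTM{1}{1}{1}$. First I would reduce to primitive vectors: writing $g_x = \gcd(x_1,x_2)$ and $g_y = \gcd(y_1,y_2)$, Lemma~\ref{lem-PotapovSemukhin16} shows that a solution can exist only when $g_x = g_y =: g$; since $\vec{x} \neq \vec{0}$ we have $g \geq 1$, and because every $M \in \SL$ is linear and invertible, $M\vec{x} = \vec{y}$ is equivalent to $M\vec{x}' = \vec{y}'$ for the primitive vectors $\vec{x}' = \vec{x}/g$ and $\vec{y}' = \vec{y}/g$. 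Thus the problem reduces to the primitive case, where (as argued below) a solution always exists.

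The central observation is that the stabilizer of $\vec{e}_1 = \Ve{1}{0}$ in $\SL$ is exactly $\{U^k : k \in \Z\}$: if $N\vec{e}_1 = \vec{e}_1$ then the first column of~$N$ is~$\vec{e}_1$, so $N = \UTM{1}{b}{d}$, and $\det N = d = 1$ forces $N = U^b$. Next, since $\vec{x}'$ is primitive, extended Euclid produces in polynomial time integers $c,d$ with $x_1' d - x_2' c = 1$, so that $A \defeq \begin{pmatrix} x_1' & c \\ x_2' & d \end{pmatrix} \in \SL$ satisfies $A\vec{e}_1 = \vec{x}'$; likewise I compute $A' \in \SL$ with $A'\vec{e}_1 = \vec{y}'$. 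Setting $B \defeq A'$ and $C \defeq A^{-1}$, the desired equivalence follows from the chain
\[
 M\vec{x}=\vec{y} \iff M\vec{x}'=\vec{y}' \iff (A'^{-1}MA)\,\vec{e}_1 = \vec{e}_1 \iff A'^{-1}MA = U^k \iff M = B\,U^k\,C,
\]
where the middle step uses the stabilizer computation. In particular $M = A' A^{-1}$ (the case $k=0$) is an explicit solution, confirming solvability in the primitive case.

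For the polynomial-time bound I would check that the Bezout coefficients returned by the Euclidean algorithm, and hence all entries of $A, A', B, C$, have bit-length polynomial in the input, which is standard. The key algebraic step---identifying the stabilizer of $\vec{e}_1$---is elementary, so I expect the main obstacle to be bookkeeping rather than depth: carefully justifying the reduction to primitive vectors (the common scaling by~$g$ must be checked on both sides simultaneously), and handling the degenerate reading of the statement when $g_x \neq g_y$, in which case no $M$ exists and \emph{no} pair $B,C$ can satisfy the equivalence (since $\{B U^k C : k \in \Z\}$ is never empty). I would therefore present the construction under the standing assumption $g_x = g_y$, detecting the unsolvable case upfront via the $\gcd$ test of Lemma~\ref{lem-PotapovSemukhin16}.
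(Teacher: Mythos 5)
This statement is not proved in the paper at all: it is imported verbatim as a citation of Theorem~8 and Corollary~9 of \cite{PotapovSemukhin16}, so there is no in-paper proof to compare against. Judged on its own merits, your proof is correct and is essentially the standard orbit--stabilizer argument that underlies the cited result: reduce to primitive vectors via the gcd (using Lemma~\ref{lem-PotapovSemukhin16}), use transitivity of the $\SL$-action on primitive vectors (extended Euclid gives $A, A' \in \SL$ carrying $\vec{e}_1$ to $\vec{x}'$ and $\vec{y}'$, in polynomial time), identify the stabilizer of $\vec{e}_1$ as the cyclic group generated by $\UTM{1}{1}{1}$, and conjugate to get the coset description with $B = A'$, $C = A^{-1}$. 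Each step checks out, including the determinant computation $x_1'd - x_2'c = 1$ and the bit-length bound. You also correctly flag a genuine imprecision in the statement as quoted: when $\gcd(x_1,x_2) \neq \gcd(y_1,y_2)$ the solution set is empty, yet $\{B U^k C : k \in \Z\}$ never is, so no pair $B, C$ can satisfy the stated equivalence; the result must be read as applying to the solvable case (which your gcd test detects upfront), and indeed this is how the paper uses it in the proof of Theorem~\ref{thm-mortality}, where the equivalence is only invoked after a solution is assumed or exhibited.
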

In the following theorem, a \emph{regular expression} describes a set of matrices, so that the atomic expressions describe singleton sets, the operator~$\mathord{\cup}$ is set union, and the operator~$\mathord{\cdot}$ is elementwise multiplication:
\begin{theorem}[{\cite[Corollary 5.1]{BellIdentity17}}] \label{thm-BellIdentity17}
Given a regular expression over matrices in~$\SL$, one can decide in~\NP\ whether the set described by the regular expression intersects with $\{I, -I\}$, where $I$ denotes the identity matrix.
\end{theorem}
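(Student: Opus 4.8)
The plan is to turn the regular expression into a nondeterministic finite automaton $\mathcal{A}$ whose transitions are labelled by matrices from $\SL$, so that the set described by the regular expression is precisely the set of products $M_{i_1} M_{i_2} \cdots M_{i_m}$ read along accepting runs of $\mathcal{A}$. Since $\{I,-I\}$ is exactly the kernel of the projection $\SL \to \mathrm{PSL}(2,\Z)$, the described set meets $\{I,-I\}$ if and only if some accepting run of $\mathcal{A}$ has a product that is trivial in $\mathrm{PSL}(2,\Z)$. This reformulation is what I would work with.

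The heart of the argument is a \emph{short-witness lemma}: if some accepting run has $\mathrm{PSL}(2,\Z)$-trivial product, then one of length polynomial in $|\mathcal{A}|$ and the bit-sizes of the labels does too. This immediately yields membership in \NP: the algorithm guesses such a run (a polynomial-length sequence of transitions), multiplies the corresponding labels, and checks both that the run is accepting and that the product lies in $\{I,-I\}$. The point to note is that although a single label can have entries of magnitude exponential in its word length, a product of polynomially many matrices of polynomial bit-size again has polynomial bit-size, so the final multiplication and the membership test are deterministic polynomial time.

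To prove the short-witness lemma I would use the free-product structure $\mathrm{PSL}(2,\Z) \cong C_2 * C_3$ and its action on the associated Bass--Serre tree. A product of the labels is trivial exactly when the corresponding walk on the tree is closed, and every closed walk decomposes canonically into nested excursions, i.e.\ a Dyck-like matching of its steps. The pruning argument then combines this nesting with the pigeonhole principle on the states of $\mathcal{A}$: within a sufficiently long excursion the automaton must repeat a state at matched heights of the tree, and the intervening balanced sub-walk can be excised without destroying either the closedness of the walk or the acceptance of the run. Iterating bounds the excursion depth and the number of excursions polynomially.

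The main obstacle is exactly this pruning step, because the two invariants one must preserve pull in opposite directions: shortening the automaton run by cutting a cycle generically changes the accumulated group element, while cancelling a balanced piece of the group walk generically breaks the automaton run. Reconciling them requires cutting cycles that are simultaneously balanced on the tree and returning in $\mathcal{A}$, which is precisely what the nested excursion structure of a closed tree-walk provides. Establishing the polynomial bound rigorously---carefully accounting for the finite-index discrepancy between $\SL$ and the free subgroup on which the tree action is free, and for the sign contributed along the run---is the delicate part of the proof; once the bound is in hand, the \NP\ procedure above is routine.
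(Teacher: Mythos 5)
This statement is not proved in the paper at all: it is imported verbatim from \cite[Corollary~5.1]{BellIdentity17}, so the comparison must be with the cited work's argument rather than anything in this paper. Measured against that, your proposal has a fatal gap: the short-witness lemma on which everything rests is false. Take $A = \begin{pmatrix} 1 & 2 \\ 0 & 1 \end{pmatrix}$ and $B = \begin{pmatrix} 1 & -2^k \\ 0 & 1 \end{pmatrix}$, both in $\SL$, with $B$ given in binary (input size polynomial in~$k$), and the regular expression $A^* \cdot B$. The described set is $\left\{ \begin{pmatrix} 1 & 2n - 2^k \\ 0 & 1 \end{pmatrix} \mid n \in \N \right\}$, which contains $I$ (take $n = 2^{k-1}$), so this is a yes-instance; yet the \emph{unique} accepting run whose product lies in $\{I,-I\}$ has length $2^{k-1}+1$, exponential in the input size. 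Hence no pruning argument can produce a polynomial-length run, and your \NP\ procedure (guess an explicit run, multiply the labels) is inherently incomplete. The counterexample also shows exactly where the Bass--Serre pruning breaks: a single matrix of polynomial bit-size is a word of exponential length over the free-product generators and moves the basepoint exponentially far in the tree, so parabolic elements like $A$ generate walks with exponentially many distinct ``heights'', and the pigeonhole over (state, height) pairs gives only an exponential bound. A secondary (fixable) flaw is the claim that the product is trivial exactly when the walk is closed: a closed walk only places the product in the finite vertex stabilizer, a conjugate of $C_2$ or $C_3$ in $\mathrm{PSL}(2,\Z)$, not in $\{1\}$.

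The actual proof in \cite{BellIdentity17} is built precisely around this obstruction: the \NP\ certificate is \emph{succinct} rather than an explicit product. Matrices are translated into (possibly exponentially long, but highly structured) words over the generators of $C_2 * C_3$, and the nondeterministic machine guesses binary-encoded data --- in effect, how many times each generator/transition is used together with a compressed description of how the cancellations are scheduled --- and verifies polynomially many arithmetic conditions on these counts. This is consonant with the \NP-hardness side of the same paper (and with Proposition~\ref{prop-NP-hardness} here), where subset-sum-style instances force witnesses whose lengths are exponential while their descriptions are polynomial. So your reduction to an automaton over $\SL$ and the observation about the kernel of $\SL \to \mathrm{PSL}(2,\Z)$ are fine as a starting point, but the heart of the theorem is coping with exponentially long products via compressed certificates, an idea your proposal is missing and that its explicit-run strategy cannot be repaired to supply.
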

Now we can complete the proof of Theorem~\ref{thm-mortality}:
\begin{proof}[Proof of the upper bound in Theorem~\ref{thm-mortality}]
We give an \NP\ procedure.
We guess the matrices $M_1, M_n$ from Lemma~\ref{lem-Bournez}.
Define $\M' \defeq \M \cap \SL$.
We have to verify that there is a matrix $M \in \gen{\M'}$ such that $M_1 M M_n = \zero$.
Let $\vec{x} = \VVe{x_1}{x_2} \in \Z^2$ be a non-zero rational multiple of a non-zero column of~$M_n$ (if $M_n$ does not have a non-zero column, the problem is trivial) such that $\gcd(x_1, x_2) = 1$.
This defines $\vec{x}$ uniquely up to a sign.
Similarly, let $\begin{pmatrix} y_1 & y_2\end{pmatrix} \in \Z^2$ be a non-zero rational multiple of a non-zero row of~$M_1$ such that $\gcd(y_1, y_2) = 1$.
Now it suffices to check whether there is $M \in \gen{\M'}$ such that $\begin{pmatrix} y_1 & y_2 \end{pmatrix} M \vec{x} = 0$.
By Lemma~\ref{lem-PotapovSemukhin16}, this holds if and only if $M \vec{x} \in \{\vec{y}, -\vec{y}\}$ where $\vec{y} := \VVe{-y_2}{y_1}$.
For $\vec{y}$ and $-\vec{y}$, compute the matrices $B_1,C_1$ and $B_2,C_2$ from Theorem~\ref{thm-PotapovSemukhin16}, respectively.
Let $\{A_1, \ldots, A_m\} = \M'$, and note that $A_1^{-1}, \ldots, A_m^{-1} \in \SL$.

We claim that there is $M \in \gen{\M'}$ with $M \vec{x} \in \{\vec{y},-\vec{y}\}$ if and only if there is $i \in \{1,2\}$ with
\begin{equation} \label{eq-reg-exp-intersection}
 B_i\left( \begin{pmatrix} 1 & 1 \\ 0 & 1\end{pmatrix} \cup \begin{pmatrix} 1 & -1 \\ 0 & 1\end{pmatrix} \right)^* C_i \left( A_1^{-1} \cup \cdots \cup A_m^{-1} \right)^* \ \cap\ \{I,-I\} \quad \ne \quad \emptyset\,.
\end{equation}
By Theorem~\ref{thm-BellIdentity17}, this claim completes the proof.

To prove the claim, suppose there is $M \in \gen{\M'}$ with $M \vec{x} = + \vec{y}$ (the negative case is similar).
By Theorem~\ref{thm-PotapovSemukhin16}, there is $k \in \Z$ with $I = B_1 \begin{pmatrix} 1 & 1 \\ 0 & 1 \end{pmatrix}^k C_1 M^{-1}$, implying~\eqref{eq-reg-exp-intersection}.
Conversely, suppose \eqref{eq-reg-exp-intersection} holds for $i=1$ (the case $i=2$ is similar).
Then there are $k \in \Z$ and $M \in \gen{\M'}$ such that $B_1 \begin{pmatrix} 1 & 1 \\ 0 & 1 \end{pmatrix}^k C_1 \in \{M, -M\}$.
By Theorem~\ref{thm-PotapovSemukhin16} it follows that $M \vec{x} = \vec{y}$ or $-M \vec{x} = \vec{y}$.
\end{proof}

\section{Two-Dimensional Upper-Triangular Integer Matrices} \label{sec-2dUTIM}
Motivated by the connections with affine reachability (Proposition~\ref{prop-matrix-affine})
we study in this section the complexity of reachability problems in finitely generated monoids $\gen{\M}$ over two-dimensional upper-triangular integer matrices.
Specifically, we consider membership, vector reachability, and scalar reachability as defined in Section~\ref{sec-prelims}.

%!TEX root = main.tex
\subsection{Determinant \texorpdfstring{$\pm 1$}{+-1}} \label{sec-determinant-one}
In this section we study the case where the monoid $\gen{\M}$ is restricted to matrices with determinants $\pm 1$, i.e., with $\pm 1$ on the diagonal.
In this case, the matrices $M \in \gen{\M}$ are characterized by the sign pattern on the diagonal and the top-right entry.
Our problems become \NP-complete under this restriction, but are in \PTIME\ if the determinants are $-1$.
First we prove the following lemma:
\begin{lemma}\label{lemma_diag_one_np}
%Let $\M \subseteq \UT[\det(A) \in \{1, -1\}]$.
Let $\M \subseteq \UT$ be with $\det(M) \in \{1,-1\}$ for all $M \in \M$.
There exists an existential Presburger formula $\varphi(s, a, t)$ that can be constructed in time polynomial in the description of $\M$ such that $\varphi(s, a, t)$ holds if and only if $\begin{pmatrix}s & a \\  & t \end{pmatrix} \in \gen{\M}$.
\end{lemma}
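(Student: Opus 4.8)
The plan is to phrase the membership condition $\UTM{s}{a}{t}\in\gen{\M}$ as a reachability query in a single-counter $\Z$-VASS and then invoke the existential-Presburger definability (and polynomial-time constructibility) of $\Z$-VASS reachability~\cite{HaaseHalfon14}. The starting point is to understand how the entries of a product $M_1\cdots M_n$ of generators $M_k=\UTM{s_k}{a_k}{t_k}$ evolve. Writing $\sigma_k\defeq\prod_{j\le k}s_j$ and $\tau_k\defeq\prod_{j\le k}t_j$ for the prefix products (with $\sigma_0=\tau_0=1$), a routine induction gives
\[
 M_1\cdots M_n \;=\; \UTM{\sigma_n}{\;\sum_{k=1}^{n}\sigma_{k-1}\,a_k\prod_{j>k} t_j\;}{\tau_n}\,.
\]
Since every $s_j,t_j\in\{1,-1\}$, we have $\prod_{j>k}t_j=\tau_n\tau_k$, so the top-right entry factors as $\tau_n\cdot A'$ with $A'\defeq\sum_{k=1}^n\sigma_{k-1}\tau_k\,a_k$. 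The point of this rewriting is that, unlike the raw top-right entry, the quantity $A'$ depends only on \emph{prefix} products $\sigma_{k-1},\tau_k$ and can therefore be accumulated by a left-to-right scan.

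This suggests the following $\Z$-VASS $\mathcal{V}$ with a single counter. Its control states are the four pairs $(\sigma,\tau)\in\{1,-1\}^2$ recording the current prefix sign products; the initial state is $(1,1)$ (the empty product, i.e.\ the identity matrix, with counter value $0$). For each generator $M_k=\UTM{s_k}{a_k}{t_k}$ and each state $(\sigma,\tau)$, $\mathcal{V}$ has a transition from $(\sigma,\tau)$ to $(\sigma s_k,\tau t_k)$ that adds the \emph{constant} integer $\sigma\,\tau\,t_k\,a_k$ to the counter, matching the contribution $\sigma_{k-1}\tau_k a_k$ of the $k$-th factor to $A'$. By construction, a run of $\mathcal{V}$ from $(1,1)$ to a state $(S,T)$ ending with counter value $A'$ exists if and only if some product of generators equals $\UTM{S}{T\cdot A'}{T}$; here $S=\sigma_n$ and $T=\tau_n$ are exactly the final diagonal sign products.

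It then remains to translate this into an existential Presburger formula $\varphi(s,a,t)$. By~\cite{HaaseHalfon14}, the reachability relation of $\mathcal{V}$---relating the initial configuration to a final control state together with the final counter value---is definable by an existential Presburger formula $\psi$ constructible in polynomial time. I would take $\varphi(s,a,t)$ to be the disjunction, over the (at most four) final control states $(S,T)$ with $S,T\in\{1,-1\}$, of the assertion that there is a counter value $A'$ with $\psi$ reaching $(S,T)$ with value $A'$, together with the diagonal constraints $s=S$, $t=T$ and the linear constraint $a=T\cdot A'$; since $S,T$ are fixed constants in each disjunct, $a=T\cdot A'$ is the linear relation $a=A'$ or $a=-A'$, and the whole formula stays within existential Presburger. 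The main obstacle is the first step: spotting that the global sign $\tau_n$ can be factored out so that the otherwise two-sided (prefix/suffix) dependence of the top-right entry collapses to a prefix-only accumulation, which is what makes a single-counter $\Z$-VASS (rather than a genuinely more expressive device) sufficient; once this is in place, the remaining steps are bookkeeping plus an appeal to~\cite{HaaseHalfon14}.
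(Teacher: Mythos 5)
Your proposal is correct and takes essentially the same approach as the paper: the paper's appendix builds the identical four-state $\Z$-VASS whose states record the diagonal sign products and whose transitions $((\sigma,\tau),\,\sigma\tau t'a',\,(\sigma s',\tau t'))$ add exactly your weights, then likewise concludes via the existential-Presburger definability of $\Z$-VASS reachability from~\cite{HaaseHalfon14}. The only cosmetic difference is that you justify the run/product correspondence via a closed-form formula for the top-right entry, while the paper verifies the same correspondence by a direct induction on the run length.
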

\begin{proof}[Proof sketch]
Note that $\M \subseteq \UT[|A_{11}| = |A_{22}| = 1]$. We reduce the problem whether $\begin{pmatrix}s & a \\  & t \end{pmatrix} \in \gen{\M}$ holds for some $s,t \in \{1,-1\}$ and $a \in \Z$ to a reachability problem on one-dimensional \emph{$\Z$-VASS} (integer vector addition systems with states)~\cite{HaaseHalfon14}. The reachability relation of one-dimensional $\Z$-VASS is known to be effectively definable by an existential Presburger formula in polynomial time; see, e.g.,~\cite{HaaseHalfon14}.
This entails the claim to be shown.
See Appendix~\ref{app-z-vass-induction} for the details.
\end{proof}

\begin{theorem} \label{thm-pm1}
Let $\M \subseteq \UT$ be with $\det(M) \in \{1,-1\}$ for all $M \in \M$.
\begin{enumerate}
\item Membership, vector reachability and scalar reachability are \NP-complete.
\item They are \NP-hard even for $\M \subseteq \UT[A_{11} = A_{22} = 1]$ and for $\M \subseteq \UT[A_{11} = A_{22} = -1]$.
\item They are in \textup{\PTIME} if $\det(M) = -1$ for all $M \in \M$.
\end{enumerate}
\end{theorem}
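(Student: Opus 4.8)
The three parts call for three different techniques, but in each part all three reachability variants reduce to the membership analysis. For the upper bound in Part~1 the plan is to invoke Lemma~\ref{lemma_diag_one_np}, which yields in polynomial time an existential Presburger formula $\varphi(s,a,t)$ defining membership $\UTM{s}{a}{t} \in \gen{\M}$. Since satisfiability of existential Presburger sentences is in \NP, this already places membership in \NP. For vector reachability with input $\vec x,\vec y$ I would observe that $\UTM{s}{a}{t}\vec x = \vec y$ is equivalent to the constraints $s x_1 + a x_2 = y_1$ and $t x_2 = y_2$, which are linear in the unknowns $s,a,t$; conjoining them with $\varphi$ gives an existential Presburger sentence decidable in \NP. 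Scalar reachability is handled the same way, using that $\vec y^T \UTM{s}{a}{t}\vec x = (x_1 y_1)s + (x_2 y_1)a + (x_2 y_2)t$ is again linear in $s,a,t$. The matching \NP-hardness of all three problems is already given by Proposition~\ref{prop-NP-hardness}, whose generators carry $1$s on the diagonal; this simultaneously settles Part~1 and the first half of Part~2.

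For the remaining case of Part~2, $\M \subseteq \UT[A_{11}=A_{22}=-1]$, I would reduce from multi-subset-sum. Writing $\UTM{-1}{a}{-1} = -\UTM{1}{-a}{1}$, a direct computation shows that a product of $n$ generators equals $(-1)^n\UTM{1}{-\sum_k a_{i_k}}{1}$, so the top-right entry of any product is, up to sign, a multiset sum of generator entries, and the diagonal is $(1,1)$ exactly when $n$ is even. Hence reaching a target $\UTM{1}{t}{1}$ encodes whether $t$ is a multiset sum of an \emph{even} number of generator entries. To remove the parity restriction I would adjoin the generator $-I = \UTM{-1}{0}{-1}$, whose use increments the number of factors without changing the sum; padding by at most one copy turns any witnessing multiset into one of even cardinality. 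This gives \NP-hardness for membership, and analogous encodings (mirroring Proposition~\ref{prop-NP-hardness}) give it for vector and scalar reachability.

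Part~3 is the most substantial. The key observation is that when every generator has determinant $-1$, each generator $\UTM{s}{a}{t}$ satisfies $s+t=0$ and hence squares to the identity; thus $\gen{\M}$ is in fact a \emph{group}, a subgroup of $G = \{\UTM{s}{a}{t} : s,t\in\{1,-1\},\, a\in\Z\} \cong \Z \rtimes (\Z/2)^2$. Let $H = \gen{\M}$ and let $d\Z = \{a : \UTM{1}{a}{1}\in H\}$ be the necessarily cyclic subgroup of top-right entries of the diagonal-$(1,1)$ elements. A short computation with inverses in $G$ shows that for each achievable diagonal pattern $(\sigma,\tau)$ the set of reachable top-right entries is a single coset $A_0 + d\Z$. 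The algorithmic content is then to compute, in polynomial time, the generator $d$ and one representative $A_0$ per pattern. I would obtain these by a Reidemeister--Schreier computation on the finite-index normal subgroup $H \cap \Z = \ker(\pi)$ (of index at most $4$), where $\pi\colon H \to (\Z/2)^2$ is the diagonal projection: a Schreier transversal of size at most $4$ produces polynomially many explicit elements of $H \cap \Z$ whose top-right entries are small integer combinations of the $a_i$, and their $\gcd$ is $d$. With $d$, the representatives, and the finite set of achievable patterns in hand, membership of $\UTM{\sigma}{A}{\tau}$ reduces to checking $A \equiv A_0 \pmod d$, a linear Diophantine condition; vector and scalar reachability reduce likewise, since for each of the at most four patterns the defining condition becomes a single linear equation in the unknown $A$ ranging over $A_0 + d\Z$, i.e.\ a divisibility-plus-congruence test, all decidable in \PTIME.

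The main obstacle I anticipate is Part~3: proving the coset characterization rigorously and, in particular, guaranteeing that a correct generator $d$ of $H \cap \Z$ is extracted in polynomial time. The subtlety is that the top-right entry of a product depends on the \emph{order} of the factors through position-dependent signs, so a naive ``count the generators'' encoding is insufficient; it is precisely the group structure (each generator being an involution) that collapses this order-dependence into the clean coset description and thereby makes the reduction to a linear system of Diophantine equations valid.
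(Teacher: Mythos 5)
Your Parts 1 and 2 coincide with the paper's proof: the upper bound is Lemma~\ref{lemma_diag_one_np} plus the fact that existential Presburger arithmetic is in \NP, and the lower bounds are Proposition~\ref{prop-NP-hardness} together with the same parity device of adjoining $-I = \UTM{-1}{0}{-1}$ in the $(-1,-1)$ case (your sign bookkeeping needs the generators to carry $-a_i$, as the paper's $\UTM{-1}{-a_i}{-1}$ do, but that is cosmetic). Your treatment of vector and scalar reachability, conjoining the linear constraints $s x_1 + a x_2 = y_1$ and $t x_2 = y_2$ to $\varphi$, is an equivalent variant of the paper's reduction to membership over the four diagonal patterns, and in fact slightly more robust, since it also covers the degenerate case $x_2 = 0$ where the top-right entry is not uniquely determined.

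Part 3 is where you take a genuinely different route, and both routes are correct. The paper restricts attention to even-length products, which form a monoid generated by the pairwise products $\M' = \{M_1 M_2 \mid M_1, M_2 \in \M\}$ with diagonals $(\pm1,\pm1)$; it observes that $\M'_+$ and $\M'_-$ are closed under negating the top-right entry, derives from this an explicit sumset description $\UTM{\pm1}{S^{\oplus k} + g\Z}{\pm1}$ of the products containing exactly $k$ factors from $\M'_-$ (Appendix~\ref{app-thm-pm1}), and reduces membership to an affine Diophantine system, solvable in \PTIME~\cite{Schrijver86}. You instead note that every determinant-$(-1)$ generator is an involution, so $\gen{\M}$ is a group $H$; then $H \cap \ker\pi$ has index at most $4$ in $H$, Schreier's lemma yields a constant-size generating set of it whose top-right entries have gcd $d$, and each achievable diagonal pattern carries exactly one coset $A_0 + d\Z$ of top-right entries, so all three problems become congruence tests. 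Your argument is more conceptual: the group structure explains in one stroke why the order-dependence of the signs collapses, and it replaces setting up and solving a Diophantine system by a constant-size coset enumeration plus a gcd. The paper's argument is more elementary and self-contained (no appeal to Schreier transversals); note that its closure fact $S = -S$ is precisely the shadow of the invertibility you exploit, so your proof makes explicit the group structure that the paper uses only implicitly.
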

\begin{proof}[Proof (sketch)]
For item~1 the lower bound follows from Proposition~\ref{prop-NP-hardness}.
The upper bound for membership follows from Lemma~\ref{lemma_diag_one_np} and the folklore result that existential Presburger arithmetic is in $\NP$~\cite{BoroshTreybig76,GathenSieveking78}.
Vector and scalar reachability are easily reduced to membership, as there are only four choices in total for the diagonal entries $s,t$, and this choice together with the input determines the top-right entry uniquely.
This completes the proof of item~1.

Towards item~2, \NP-hardness of the case $\UT[A_{11} = A_{22} = 1]$ follows from the proof of Proposition~\ref{prop-NP-hardness}.
For the case $\UT[A_{11} = A_{22} = -1]$ we adapt this reduction by constructing
\[
 \M \ \defeq \ \left\{ \UTM{-1}{-a_i}{-1} \;\middle\vert\; i \in \{1, \ldots, k\} \right\} \cup \left\{ \UTM{-1}{0}{-1} \right\} \quad \text{and} \quad T \ \defeq \ \UTM{1}{t}{1}\,.
\]
Note that $\UTM{-1}{-a}{-1}  \UTM{-1}{-b}{-1} = \UTM{1}{a+b}{1}$.
The additional (negative identity) matrix ensures that an even number of matrices from~$\M$ can be used to form the product~$T$.
\NP-hardness for vector reachability and $0$-reachability are similar.
This completes the proof of item~2.

Towards item~3, we will give an explicit description of $\gen{\M}$, such that membership can be checked in polynomial time.
In slightly greater detail, we focus on matrix products of even length $M_1 \cdots M_{2 n} \in \gen{\M}$.
These are exactly the matrices in $\gen{\M}$ with determinant~$1$.
The extension to odd-length products (which have determinant~$-1$) will be straightforward, as such products simply arise from even-length products multiplied with a single element of~$\M$.
The even-length products also form a monoid, finitely generated by $\M' \defeq \{M_1 M_2 \mid M_1 \in \M,\ M_2 \in \M\}$, and all matrices in~$\M'$ have $(+1,+1)$ or $(-1,-1)$ on the diagonal.
Clearly, $\M'$ can be computed in polynomial time.
We show in Appendix~\ref{app-thm-pm1} that $\gen{\M'}$ can be characterized by a system of affine Diophantine equations.
It is known that affine Diophantine equations can be solved in polynomial time~\cite[Chapter~5]{Schrijver86}.
The vector reachability and scalar reachability problems (with the restriction on determinants in place) easily reduce to the membership problem, hence are also in~\PTIME.
\end{proof}

\subsection{Vector Reachability} \label{ssec-vec-reachability}
We show:
\begin{theorem}
\label{thm-vec-reachability}
The vector reachability problem for $\UT[A_{22} \neq 0]$ is in $\PSPACE$.
\end{theorem}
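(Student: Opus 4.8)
The plan is to give a nondeterministic polynomial-space algorithm that guesses a witnessing product of generators one factor at a time, maintaining the current vector $\VVe{v_1}{v_2}$; since $\mathsf{NPSPACE} = \PSPACE$ by Savitch's theorem, it suffices to exhibit, whenever the target is reachable, a witness along which each configuration and a step counter fit in polynomially many bits. Writing a generator as $\UTM{a}{b}{c}$ with $c \neq 0$, its action sends $\VVe{v_1}{v_2}$ to $\VVe{a v_1 + b v_2}{c v_2}$, so the second coordinate evolves purely multiplicatively, $v_2 \mapsto c\,v_2$, independently of $v_1$.

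First I would dispatch the degenerate case $x_2 = 0$: then $v_2$ stays $0$ throughout, the update degenerates to $v_1 \mapsto a v_1$, and reachability of $\vec y$ amounts to $y_2 = 0$ together with deciding whether $y_1 = 0$ (when $x_1 = 0$) or whether $y_1/x_1$ is a product of top-left entries drawn from $\M$ (when $x_1 \neq 0$)---a one-dimensional multiplicative reachability question easily placed in $\PSPACE$, since the number of factors of absolute value $\ge 2$ is at most $\log_2|y_1|$ and $\pm 1$ factors only toggle signs. So assume $x_2 \neq 0$. The key structural observation is monotonicity: every generator has $|c| \ge 1$, hence $|v_2|$ is non-decreasing along any run. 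Consequently a successful run (which must end at $v_2 = y_2$, forcing $\prod_i c_i = y_2/x_2$, and in particular $y_2 \neq 0$ and $x_2 \mid y_2$) keeps $|x_2| \le |v_2^{(i)}| \le |y_2|$ throughout, so $v_2$ ranges over integers of $O(\log|y_2|)$ bits and can be stored in polynomial space.

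The remaining and main difficulty is to bound $|v_1|$ along some witness by $2^{\mathrm{poly}}$, equivalently by polynomially many bits; the step-counter bound then follows, since a shortest witness never repeats a configuration and there are at most $2^{\mathrm{poly}}$ configurations within these bounds. To control $v_1$ I would again exploit monotonicity: the indices where $|c_i| \ge 2$ number at most $\log_2|y_2/x_2|$, so the run decomposes into $O(\log|y_2|)$ blocks of steps with $|c_i| = 1$, separated by these $O(\log|y_2|)$ ``scaling'' steps. Within one block, $v_2$ has a fixed magnitude $s \le |y_2|$ and only its sign may flip, so each update is affine in $v_1$, namely $v_1 \mapsto a v_1 \pm b\,s$, with additive constant of magnitude at most $(\max_i|b_i|)\cdot|y_2|$, which is poly-bit bounded. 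Thus each block is precisely an instance of affine reachability over $\Z$ with poly-size data, and its endpoints may themselves be taken poly-bit bounded. Applying the $\PSPACE$ analysis behind Theorem~\ref{thm-fgh} for affine register machines blockwise, each block admits a witnessing subcomputation in which $v_1$ stays within $2^{\mathrm{poly}}$; concatenating the $O(\log|y_2|)$ blocks would yield a global witness with the required bound on $|v_1|$.

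I expect this last step to be the crux: turning the blockwise affine analysis into a uniform $2^{\mathrm{poly}}$ bound on $|v_1|$ that holds across block boundaries. The subtlety is that the additive constants $\pm b\,s$ depend on the runtime value $s$ of $v_2$, which can be exponentially large (though poly-bit), so one cannot simply hard-code the blocks into a single affine register machine and invoke Theorem~\ref{thm-fgh} as a black box; the bound must be argued directly on configurations. Granting it, the $\mathsf{NPSPACE}$ procedure---guess the next generator, update $\VVe{v_1}{v_2}$, reject if $|v_2| > |y_2|$, if $|v_1|$ exceeds the established bound, or if the step counter is exhausted, and accept upon reaching $\vec y$---runs in polynomial space, placing vector reachability for $\UT[A_{22} \neq 0]$ in $\PSPACE$.
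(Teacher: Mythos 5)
Your analysis up to the decomposition is exactly the paper's: the second coordinate evolves purely multiplicatively, the case $x_2=0$ (or $y_2=0$) is dispatched separately, and any witnessing product contains at most $O(\log|y_2|)$ factors with $|A_{22}|>1$, so it splits into logarithmically many blocks inside which the updates act affinely on the first coordinate. But your proof is not complete: the step you yourself flag as the crux --- a $2^{\mathrm{poly}}$ bound on $|v_1|$ along some witness, valid across block boundaries --- is exactly what is missing, and nothing in your argument supplies it. It does not follow from ``the \PSPACE\ analysis behind Theorem~\ref{thm-fgh}'': that theorem asserts \PSPACE\ decidability of PRM reachability, not that a reachable configuration is always reachable by a run whose intermediate register values have polynomially many bits. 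Witnessing runs of affine machines can be exponentially long, and along such runs the register can acquire exponentially many bits, so the poly-bit configuration bound your \textsf{NPSPACE} guess-and-check procedure needs is a substantive claim that would essentially require re-proving (a strengthening of) the FGH result rather than citing it.

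Moreover, the obstruction you invoke to rule out the black-box route is illusory, and removing it yields the paper's actual proof. Once the scaling matrices $B^{(1)},\ldots,B^{(l)}$ are guessed, the runtime value of $v_2$ ranges over only the $2(l+1)$ signed prefix products $\pm x_2\prod_{j\le k}B^{(j)}_{22}$, each of absolute value at most $|y_2|$ and hence of polynomial bit-size. The paper builds a single PRM $\Reg_B$ whose \emph{states} are these signed values (your block index together with the sign of $v_2$) and whose transitions carry the affine updates $a\mapsto M_{11}a+M_{12}\alpha$, with the constants $M_{12}\alpha$ hard-coded in binary. Since the PRM model, and the \PSPACE\ bound of Theorem~\ref{thm-fgh}, are measured in the bit-size of the machine's description, coefficients that are exponentially large in value but polynomial in bit-length are perfectly admissible; the resulting machine has polynomial size, and reachability in it is decided in \PSPACE\ by the theorem, with no bound on $|v_1|$ ever needed. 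So the repair is: keep your decomposition, encode (block index, sign of $v_2$) in control states with the induced affine constants, and invoke Theorem~\ref{thm-fgh} on that machine instead of attempting a direct configuration-space bound.
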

\begin{proof}
We construct a nondeterministic Turing machine $\T$ that decides the reachability problem for $\UT[A_{22} \neq 0]$ in polynomial space.
Let $\M \subseteq \UT[A_{22} \neq 0]$ and $\vec{x}, \vec{y} \in \Z^2$ be an instance of the reachability problem, that is, $\T$ has to check whether $M \cdot \vec{x} = \vec{y}$ holds for some $M \in \gen{\M}$.

Assume that $M^{(1)} \cdot \ldots \cdot M^{(k)} \cdot \vec{x} = \vec{y}$ holds for some
$M^{(1)}, \ldots, M^{(k)} \in \M$.
Observe that for all $A \in \M$ and all $z_1, z_2, z_1', z_2' \in \Z$ such that $A  \VVe{z_1}{z_2} = \VVe{z_1'}{z_2'}$, we have
$z'_2 = A_{22} z_2$.
From this observation we conclude:
\begin{enumerate}
	\item If $x_2 \neq 0$, then $y_2 \neq 0$, too, and  the number of indices $1 \leq i \leq k$ s.t.\  $|M^{(i)}_{{22}}| > 1$ is bounded by $\mathcal{O}\left(\text{log}(|y_{2}|) \right)$.
     \item If $x_2 = 0$, then $y_1 = M^{(1)}_{11} \cdot \ldots \cdot M^{(k)}_{11} \cdot x_1$.
	\item If $x_2 = 0$ or $y_2 = 0$, then $x_2 = y_2 = 0$. %must hold.
\end{enumerate}

Let us first consider the case where  $x_2 = 0$ or $y_2 = 0$ holds.
In this case, $\T$ rejects the input if $x_2 \neq 0$ or $y_2 \ne 0$.
Otherwise, $\T$ needs to check whether $y_1$ can be written as a product $M^{(1)}_{11} \cdot \ldots \cdot M^{(k)}_{11} \cdot x_1$ for some indices $1, \ldots, k$, which can be done in polynomial space (even in NP), since $k$ can be bounded by $\mathcal{O}(\text{log}(|y_1|))$.

Now consider the case where $|x_2| > 0$ and $|y_2| > 0$ holds.
By the above observation, if the reachability problem has a solution, it can be given by
\begin{align} \label{eq_compose}
\vec{y} = A^{(l+1)} \cdot
          B^{(l)} \cdot
          A^{(l)} \cdot \ldots
          \cdot B^{(2)}
          \cdot A^{(2)}
          \cdot B^{(1)}
          \cdot A^{(1)}
          \cdot \vec{x}\;,
\end{align}
%where
\begin{itemize}
	\item where the length of $l \in \N$ is polynomially bounded in the size of the input,
	\item $B^{(i)} \in \UT[|A_{22}| > 1]\cap \M$ for every  $i$,
	\item $A^{(i)}$ can be written as product of matrices from $\UT[|A_{22}| = 1] \cap \M$.
\end{itemize}

Notice that the matrices from $\UT[|A_{22}| = 1]$ behave like affine update polynomials in a PRM, with the register value stored in the first component of the vector.
This suggests the following approach: $\T$ guesses the sequence of matrices $B=B^{(1)}, \ldots ,B^{(l)}$ and
constructs a PRM $\Reg_B$, whose size is polynomially bounded in the size of the input, such that $(q, x_1) \trans{}^*_{\Reg_B} (q', y_1)$ holds for some fixed states $q, q'$ if the reachability problem has a solution of the form given in~\eqref{eq_compose}. The register machine only needs to store in its states how many of the $B$-matrices have already been applied, plus the current sign of the second vector component reached thus far.
The size is polynomial in the input. The claim then follows by applying Theorem~\ref{thm-fgh}. The details of the construction of $\Reg_B$ can be found in Appendix~\ref{app:details-constr}.
\end{proof}

Without the restriction on $\UT[A_{22} \neq 0]$ the vector reachability problem becomes hard for affine reachability over~$\Q$:
\begin{theorem} \label{thm-hardness-for-Aff-Q}
There is a polynomial-time Turing reduction from affine reachability over~$\Q$ to vector reachability.
\end{theorem}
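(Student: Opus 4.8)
The plan is to reduce \emph{affine reachability over~$\Q$} to vector reachability by coding rationals as integer column vectors and affine functions as matrices in~$\UT$, and then to introduce a ``projection gadget'' that converts the (inherently projective) target condition into the concrete target vector~$\vec{0}$. Concretely, I would represent a rational $v = n/d$ (with $d > 0$) by the vector $\Ve{n}{d}$, and a \emph{non-constant} affine function $w \mapsto a w + b$ (with $a \neq 0$) by the matrix $\UTM{qa}{qb}{q}$, where $q \geq 1$ clears the denominators of $a,b$. Such a matrix is \emph{non-singular} and multiplies the bottom entry of any vector by $q \neq 0$; hence any product of such matrices sends a vector with non-zero bottom entry to a non-zero vector with non-zero bottom entry, and the ratio of the two entries of the image is exactly $f(v)$, where $f$ is the corresponding composition.

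The key gadget: given the target $y = n_y/d_y$, I would adjoin the rank-one matrix $N \defeq \UTM{d_y}{-n_y}{0} \in \UT$, which satisfies $N\Ve{u}{v} = \Ve{d_y u - n_y v}{0}$, so that $N\Ve{u}{v} = \vec{0}$ exactly when $u/v = y$; that is, $N$ collapses the whole target ray to~$\vec{0}$. Writing $\M'$ for the matrices coding the non-constant functions and $\vec{x}_s$ for the vector coding a source rational~$s$, I claim that $\vec{0}$ is reachable from $\vec{x}_s$ in $\gen{\M' \cup \{N\}}$ if and only if there is $M \in \gen{\M'}$ with $N M \vec{x}_s = \vec{0}$, i.e.\ if and only if $y$ is reachable from $s$ using only non-constant functions. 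The forward direction is immediate; for the converse one observes that matrices in $\M'$ never turn a non-zero vector into~$\vec{0}$, so the bottom entry can vanish only through an application of~$N$, and after the \emph{first} such application every remaining factor scales the top entry by a non-zero number. Hence the final vector is~$\vec{0}$ precisely when the top entry already vanished at that first application, which is exactly the condition $\vec{y}^{\,T} M \vec{x}_s = 0$ for $\vec{y} = \Ve{d_y}{-n_y}$. A single oracle query therefore decides reachability restricted to non-constant functions.

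The remaining obstacle is \emph{constant} functions $w \mapsto b$, coded by singular matrices with zero top-left entry: these would create spurious ways of reaching~$\vec{0}$ (apply $N$, then a constant matrix to annihilate the surviving top entry), breaking the converse above. I would handle them via the structure of compositions: if a composition applied to~$x$ uses a constant function, then after the \emph{last} constant used, the value is one of the finitely many constants $b_1,\dots,b_m$ occurring in~$\A$, and only non-constant functions are applied thereafter. Thus $y$ is reachable from~$x$ iff $y$ is reachable, using non-constant functions only, from one of the sources $x, b_1,\dots,b_m$. The reduction then issues one query per source $s \in \{x,b_1,\dots,b_m\}$ — keeping $\M'$ and $N$ fixed and only changing the start vector $\vec{x}_s$ — and answers ``yes'' iff some query succeeds, giving a polynomial-time Turing reduction with $m+1$ queries.

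The step I expect to be the main obstacle is the correctness of the single-query reduction in the presence of several applications of~$N$: one must rule out any ``accidental'' zero, and this is exactly where non-singularity of the non-constant generators is used. It is also precisely this point that forces the case distinction on constant functions, and hence the use of a Turing rather than a many-one reduction.
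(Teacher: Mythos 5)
Your proposal is correct and follows essentially the same route as the paper's proof: the paper also passes through the upper-triangular matrix encoding of non-constant affine maps (its Proposition~\ref{prop-matrix-affine}.3, which is exactly your denominator-clearing encoding), adjoins the same rank-one matrix $T = \UTM{y_1}{y_2}{0}$ with target vector $\vec{0}$, argues the converse via non-vanishing of the bottom entry before the first use of $T$ and non-vanishing of the top entry afterwards (using that all top-left entries are non-zero), and handles constant functions by a Turing reduction querying each constant as an alternative start value.
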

\begin{proof}
Let an instance of affine reachability over~$\Q$ be given.
We first assume that all input functions are non-constant.
Then we use the reduction from Proposition~\ref{prop-matrix-affine}.3 to obtain an instance of the $0$-reachability problem:
$\M \subseteq \UT[A_{11} \ne 0,\ A_{22} \ne 0]$ and $\Ve{x_1}{x_2}$ and $\Ve{y_1}{y_2}$ with $x_2 \ne 0$ and $y_1 \ne 0$.
(The top-left entries are non-zero as the functions are non-constant.)
Define $T \defeq \UTM{y_1}{y_2}{0}$ and $\M' \defeq \M \cup \{T\}$.
We show that the instance of the $0$-reachability problem is positive if and only if the vector reachability for $\M'$ and $\vec{x}$ and $\vec{0}$ is positive.
Suppose the instance of the $0$-reachability problem is positive.
Then there is $M \in \gen{\M}$ such that $\begin{pmatrix} y_1 & y_2 \end{pmatrix} M \vec{x} = 0$, thus $T M \vec{x} = \vec{0}$, so $\vec{0}$ is reachable from~$\vec{x}$.
Conversely, suppose the instance of the $0$-reachability problem is negative.
Let $M \in \gen{\M}$.
Then $M \vec{x} \ne \vec{0}$, as the bottom component of~$M \vec{x}$ is non-zero.
Since the instance of the $0$-reachability problem is negative, we have $T M \vec{x} = \Ve{t}{0}$ for some $t \ne 0$.
Since the top-left component of all matrices in~$\M' \supseteq \{T\}$ is non-zero, it follows that $M' T M \vec{x} \ne \vec{0}$ holds for all $M' \in \gen{\M'}$.
Thus, $M'' \vec{x} \ne \vec{0}$ holds for all $M'' \in \gen{\M'}$, and so the instance of the vector reachability problem is negative.

Now we allow input functions of affine reachability to be constant.
Suppose the constant functions are $f_i : x \mapsto 0 x + c_i$ for $i \in \{1, \ldots, n\}$ for some $n \in \N$.
It is easy to see that then the affine reachability problem can be solved by removing all $f_i$ from the set of functions and checking affine reachability starting from~$c_i$, where $i \in \{1, \ldots, n\}$, one by one.
These instances can be reduced to vector reachability, as described before.
\end{proof}

%!TEX root = main.tex
\subsection{Membership} \label{section-membership}
In this section we study the membership problem.
As we will see, the difficulty depends on how many $0$s are allowed on the diagonal.
Any product of upper-triangular matrices is non-zero on the top-left (bottom-right, respectively) if and only if all factors are non-zero on the top-left (bottom-right, respectively).
So when we speak of the membership problem for, say, $\UT[A_{11} \ne 0]$, the restriction refers both to $\M$ and the target matrix~$T$.

The case with no $0$s on the diagonal is \NP-complete:
\begin{theorem}\label{thm-membership}
The membership problem for $\UT[A_{11} \neq 0 \land A_{22} \neq 0]$ is \NP-complete.
\end{theorem}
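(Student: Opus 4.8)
The \NP-hardness is immediate and requires no new work: the instances produced in the proof of Proposition~\ref{prop-NP-hardness} use only matrices of the form $\UTM{1}{a}{1}$ and a target $\UTM{1}{t}{1}$, all of which lie in $\UT[A_{11}\neq 0 \land A_{22}\neq 0]$. So the entire effort goes into placing the problem in~\NP. The plan is to decompose any witnessing product into a \emph{bounded} number of ``non-unit'' pivot matrices separated by blocks of matrices with $\pm1$ on the diagonal, and then to delegate each block to the existential Presburger description of Lemma~\ref{lemma_diag_one_np}, gluing the blocks together with a single extra linear equation.

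Concretely, I would split $\M$ into the unit matrices $\M_u \defeq \{M \in \M \mid |M_{11}| = |M_{22}| = 1\}$ and the non-unit matrices $\M_n \defeq \M \setminus \M_u$. Every element of $\gen{\M}$ factors as $U_0 N_1 U_1 \cdots N_r U_r$ with each $N_j \in \M_n$ and each $U_j = \UTM{\sigma_j}{\beta_j}{\tau_j} \in \gen{\M_u}$ (grouping maximal runs of unit matrices into the blocks $U_j$, which may be the identity). The key observation is that $r$ is small: since every diagonal entry is a non-zero integer, $|s| = \prod_j |(N_j)_{11}|$ and $|t| = \prod_j |(N_j)_{22}|$ (unit factors contribute~$1$), and each non-unit matrix has $|M_{11}| > 1$ or $|M_{22}| > 1$, so $r \le \log_2 |s| + \log_2 |t|$, polynomial in the bit-size of $T = \UTM{s}{u}{t}$. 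The \NP\ procedure therefore guesses $r$, the pivots $N_1, \ldots, N_r \in \M_n$, and the diagonal sign patterns $(\sigma_j,\tau_j) \in \{-1,+1\}^2$ of the blocks $U_0, \ldots, U_r$ — in total $O(\log|s| + \log|t|)$ bits. These fix all diagonal entries of the product, so the procedure directly verifies $\prod_j \sigma_j \cdot \prod_j (N_j)_{11} = s$ and the analogous bottom-right identity $=t$.

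It then remains to pin down the block top-right entries $\beta_j$, subject to two simultaneous requirements. First, each block must be realizable, i.e.\ $\UTM{\sigma_j}{\beta_j}{\tau_j} \in \gen{\M_u}$; since $\M_u \subseteq \UT[|A_{11}|=|A_{22}|=1]$, Lemma~\ref{lemma_diag_one_np} supplies an existential Presburger formula $\varphi_{\M_u}(s,a,t)$ for this, and substituting the guessed constants $\sigma_j,\tau_j$ yields a formula $\varphi_{\M_u}(\sigma_j,\beta_j,\tau_j)$ in the single free variable $\beta_j$. Second, the top-right entry of the whole product must equal~$u$. Using the standard formula for the top-right coordinate of a product of upper-triangular matrices, once all diagonals (hence all signs and the $N_j$) are fixed, this coordinate is an \emph{affine} function $\sum_{j} \gamma_j \beta_j + \delta$ of the $\beta_j$: each coefficient $\gamma_j$ is a product of diagonal entries lying on one side of block~$j$, and $\delta$ collects the fixed $(N_j)_{12}$ contributions. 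Because every $|M_{11}|, |M_{22}| \ge 1$, every such partial product of diagonals has absolute value at most $|s|\cdot|t|$, so the $\gamma_j$ and $\delta$ have polynomial bit-size and are computable in polynomial time. The procedure finally checks satisfiability of
\[
 \exists \beta_0, \ldots, \beta_r \in \Z \colon \Bigl(\bigwedge_{j=0}^{r} \varphi_{\M_u}(\sigma_j, \beta_j, \tau_j)\Bigr) \;\wedge\; \Bigl(\textstyle\sum_{j=0}^{r} \gamma_j \beta_j + \delta = u\Bigr),
\]
which is an existential Presburger sentence and hence decidable in~\NP; folding every guess into the nondeterminism keeps the whole procedure in~\NP.

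The hard part is the structural decomposition together with the affineness observation: it is what reduces an \emph{unbounded}-length product to a logarithmically bounded number of pivots, and it is what guarantees that the cross-block interaction collapses to the single linear constraint on the $\beta_j$. The one subtlety to handle with care is that the diagonal signs must be \emph{guessed} rather than treated as Presburger variables — products of those signs appear both in the coefficients $\gamma_j$ and in the diagonal constraints, and such products are nonlinear and not expressible in Presburger arithmetic; guessing them is affordable precisely because there are only $O(\log|s|+\log|t|)$ of them.
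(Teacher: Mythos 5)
Your proof is correct and follows essentially the same route as the paper's: \NP-hardness via Proposition~\ref{prop-NP-hardness}, and membership in \NP\ by bounding the number of non-unit-diagonal factors logarithmically in the bit-size of the target's diagonal entries, guessing those pivot matrices, and delegating the unit-diagonal blocks to the existential Presburger formula of Lemma~\ref{lemma_diag_one_np}, with the final existential Presburger satisfiability check in \NP. The only difference is that you spell out a detail the paper leaves implicit: explicitly guessing the blocks' diagonal sign patterns so that the top-right entry of the product becomes an affine, hence Presburger-expressible, function of the blocks' top-right entries.
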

\begin{proof}
The lower bound was shown in Proposition~\ref{prop-NP-hardness}.
%We construct a nondeterministic Turing machine that decides in polynomial time for every finite $\M \subseteq \UT[A_{11} \neq 0 \land A_{22} \neq 0]$ and every $T \in \UT$ whether $T \in \gen{\M}$ holds.
For the upper bound, we construct an \NP\ Turing machine.
Fix $\M$ and $T$.
Assume for the moment that $T$ can be written as a product $T = M^{(k)} \cdot \ldots \cdot M^{(1)}$ of matrices from~$\M$.
%A straightforward computation yields:
%
%\begin{align}\label{eq_diagonal}
%	T_{ii} = \prod_{j=1}^{k} M^{(j)}_{ii} \text{ for every } i \in \{1, 2 \}
%\end{align}
%
Let $l$ be the number of indices $i>1$ where
$M^{(i)} \in \UT[|A_{11}| > 1 \lor |A_{22}| > 1]$ holds.
%By~\eqref{eq_diagonal},
Since $T_{ii} = \prod_{j=1}^{k} M^{(j)}_{ii}$ holds for both $i \in \{1, 2 \}$, the number
$l$ can be bounded by $\mathcal{O}\left(\text{log}(|T_{11}|) + \text{log}(|T_{22}|)\right)$, and $T$ can be written as
\begin{align}\label{eq-refactor-t}
T = A^{(l+1)} B^{(l)} A^{(l)} \cdot \ldots \cdot B^{(1)} \cdot A^{(1)}M^{(1)}
\end{align} s.t.\
$B^{(j)} \in \UT[|A_{11}| > 1 \lor |A_{22}| > 1] \cap \M$ and
$A^{(j)} \in \gen{\UT[|A_{11}| = |A_{22}| = 1]\cap \M }$ for all $j$.

The Turing machine guesses matrices $B^{(1)}, \ldots, B^{(l)}$ and the matrix $M^{(1)}$ and constructs
in polynomial time an existential Presburger formula $\varphi(t_1, t_2, t_3)$ that satisfies $t_1 = T_{11}$, $t_2 = T_{12}$, $t_3 = T_{22}$ if and only if $T$ can be written as a product of the form given in~\eqref{eq-refactor-t} for the guessed $B^{(i)}$ and $M^{(1)}$. By Lemma~\ref{lemma_diag_one_np}, such a formula $\varphi(t_1, t_2, t_3)$ exists and can be efficiently constructed. The claim then follows from the fact that $\varphi(t_1, t_2, t_3)$ is existential Presburger of size polynomial in the input, and that the existential Presburger fragment is in $\NP$~\cite{BoroshTreybig76,GathenSieveking78}.
\end{proof}
The proof of Proposition~\ref{prop-matrix-affine}.1 with the isomorphism between affine functions over~$\Z$ and upper-triangular matrices with $1$ on the bottom-right shows that non-constant functions correspond to matrices that do not have $0$s on the diagonal.
Hence we have:
\begin{corollary} \label{cor-affine-composition}
Affine membership over~$\Z$ with non-constant functions is \NP-complete.
\end{corollary}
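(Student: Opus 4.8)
The plan is to read off the statement directly from Theorem~\ref{thm-membership} and Proposition~\ref{prop-NP-hardness}, using the isomorphism underlying Proposition~\ref{prop-matrix-affine}.1. That isomorphism identifies an affine function $x \mapsto a x + b$ with the matrix $\UTM{a}{b}{1}$ and identifies function composition with matrix multiplication. Crucially, a function is non-constant exactly when $a \neq 0$, i.e.\ exactly when its associated matrix lies in $\UT[A_{11} \neq 0 \land A_{22} = 1]$. Moreover a composition of non-constant affine functions is again non-constant, since its leading coefficient is the product of the (nonzero) leading coefficients; so both the generators and the target of a non-constant affine-membership instance map into $\UT[A_{11} \neq 0 \land A_{22} = 1]$. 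Hence affine membership over~$\Z$ with non-constant functions is, under this identification, precisely the matrix membership problem restricted to the class $\UT[A_{11} \neq 0 \land A_{22} = 1]$.

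For the upper bound I would note the inclusion $\UT[A_{11} \neq 0 \land A_{22} = 1] \subseteq \UT[A_{11} \neq 0 \land A_{22} \neq 0]$, so the problem is a special case of membership for the latter class, which is in \NP\ by Theorem~\ref{thm-membership}. For the lower bound I would invoke Proposition~\ref{prop-NP-hardness}, which already gives \NP-hardness of matrix membership for $\UT[A_{11} = A_{22} = 1]$. Every matrix $\UTM{1}{a}{1}$ arising there corresponds to the translation $x \mapsto x + a$, which is non-constant; thus the hard instances map, via the reverse reduction of Proposition~\ref{prop-matrix-affine}.1, to affine-membership instances all of whose functions are non-constant, and the hardness transfers.

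The only thing to verify---and it is routine rather than a genuine obstacle---is that the non-constancy restriction stays aligned on both sides of the inter-reduction: the condition $A_{11} \neq 0$ faithfully encodes non-constancy while $A_{22} = 1$ is automatic from the isomorphism, and the generators produced by Proposition~\ref{prop-NP-hardness} are all translations, hence non-constant. Combining the two bounds yields \NP-completeness.
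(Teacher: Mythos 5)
Your proposal is correct and takes essentially the same route as the paper: the paper also derives the corollary from Theorem~\ref{thm-membership} via the isomorphism of Proposition~\ref{prop-matrix-affine}.1, observing that non-constant functions correspond exactly to upper-triangular matrices with no zeros on the diagonal (and bottom-right entry~$1$). Your explicit check that the hard instances from Proposition~\ref{prop-NP-hardness} are translations, hence non-constant, is precisely the (implicit) justification the paper relies on for the lower bound.
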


The case with at most one $0$ on the diagonal can be reduced to vector reachability:
\begin{restatable}{theorem}{memberrestate}
\label{thm-membership-diag-one-zero}
The membership problems for $\UT[A_{11} \neq 0]$ and for $\UT[A_{22} \neq 0]$ are in $\PSPACE$.
\end{restatable}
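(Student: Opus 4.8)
The plan is to place both problems in $\PSPACE$ by reducing them to the vector reachability problem for $\UT[A_{22}\neq 0]$, which is in $\PSPACE$ by Theorem~\ref{thm-vec-reachability}. First I would dispose of one of the two cases by symmetry. The ``flip'' map $\phi\colon \UTM{a}{b}{c} \mapsto \UTM{c}{b}{a}$ is an anti-automorphism of $\UT$ (one checks $\phi(MN)=\phi(N)\phi(M)$), it is computable in logarithmic space, and it exchanges the two diagonal entries, so $\phi(\UT[A_{11}\neq 0]) = \UT[A_{22}\neq 0]$. Since $T = M^{(1)}\cdots M^{(k)}$ iff $\phi(T) = \phi(M^{(k)})\cdots\phi(M^{(1)})$, we get $T\in\gen{\M}$ iff $\phi(T)\in\gen{\phi(\M)}$. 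Hence membership for $\UT[A_{11}\neq 0]$ reduces to membership for $\UT[A_{22}\neq 0]$, and it suffices to treat the latter.

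So fix $\M, T \in \UT[A_{22}\neq 0]$ and write $\vec{e}_2 = \VVe{0}{1}$. For any product $P = M^{(1)}\cdots M^{(k)}$ with factors in $\M$ we have $P_{22} = \prod_i M^{(i)}_{22} \neq 0$ and $P_{11} = \prod_i M^{(i)}_{11}$, while $P\vec{e}_2 = \VVe{P_{12}}{P_{22}}$ is the second column. Thus $P = T$ iff $P\vec{e}_2 = \VVe{T_{12}}{T_{22}}$ and $P_{11} = T_{11}$. I would split on $T_{11}$. If $T_{11}\neq 0$, then every factor must satisfy $M^{(i)}_{11}\neq 0$, so the instance actually lies in $\UT[A_{11}\neq 0 \land A_{22}\neq 0]$ and membership is already in $\NP\subseteq\PSPACE$ by Theorem~\ref{thm-membership}. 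The remaining, genuinely new case is $T_{11}=0$, i.e.\ $\prod_i M^{(i)}_{11}=0$, which means at least one factor has top-left entry~$0$.

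The key structural observation for $T_{11}=0$ is that a factor with top-left entry~$0$ \emph{absorbs} the first column: left-multiplying any upper-triangular matrix by $\UTM{0}{b}{c}$ produces a matrix with first column $\zero$, and this persists under further left-multiplication. Hence a witnessing product factors as $P = P_2\, M^0\, P_1$ with $M^0 = \UTM{0}{b}{c}\in\M$ and $P_1,P_2\in\gen{\M}$, and because $M^0$ kills the first coordinate we get $M^0 (P_1\vec{e}_2) = s\,\VVe{b}{c}$ where $s := (P_1)_{22}$ is a product of bottom-right entries of generators. Consequently $P = T$ holds for some such product iff there is a generator $M^0=\UTM{0}{b}{c}$, a realizable scalar $s$, and a product $P_2\in\gen{\M}$ with $P_2\bigl(s\,\VVe{b}{c}\bigr) = \VVe{T_{12}}{T_{22}}$. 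This last condition is exactly one instance of vector reachability for $\UT[A_{22}\neq 0]$, from the start vector $s\,\VVe{b}{c}$ to the target $\VVe{T_{12}}{T_{22}}$.

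The reduction is therefore a nondeterministic polynomial-time procedure with vector reachability queries as subroutines: guess $M^0$ (polynomially many choices) and guess the scalar~$s$. Since the vector reachability call forces $T_{22} = (P_2)_{22}\,s\,c$, we have $|s|\le|T_{22}|$, so $s$ has only $\mathcal{O}(\log|T_{22}|)$ factors of absolute value $\ge 2$; these, together with a sign, can be guessed explicitly, while the call itself certifies realizability of the complementary factor $(P_2)_{22}$. Each branch issues a single vector reachability query, which is in $\PSPACE$ by Theorem~\ref{thm-vec-reachability}; as the nondeterministic guesses have polynomial size, the whole procedure runs in $\PSPACE$. The main obstacle is precisely that membership couples the first-column constraint $P_{11}=0$ with the second-column constraint on the \emph{same} product~$P$, whereas vector reachability constrains only a single column of one product; the absorption property is what decouples them, turning the coupled condition into a bounded guess plus one genuine vector reachability instance.
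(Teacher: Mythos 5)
Your proof is correct in substance, and in the one case that matters it takes a genuinely different route from the paper's. Both arguments dispose of $\UT[A_{11}\neq 0]$ by symmetry (your explicit flip anti-automorphism is a clean way to make the paper's one-word ``symmetric'' precise), both settle $T_{11}\neq 0$ by discarding generators with zero top-left entry and invoking Theorem~\ref{thm-membership}, and both reduce $T_{11}=0$ to reaching the second column $\VVe{T_{12}}{T_{22}}$. The divergence is in how the constraint ``at least one factor from $\UT[A_{11}=0]$'' is handled. The paper reduces to a \emph{constrained} vector reachability problem (reach $\VVe{T_{12}}{T_{22}}$ from $\VVe{0}{1}$ using at least one such matrix) and solves it by re-opening the proof of Theorem~\ref{thm-vec-reachability}: the register machine $\Reg_B$ built there is augmented with one state bit recording whether a matrix from $\UT[A_{11}=0]$ has been applied. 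You instead keep Theorem~\ref{thm-vec-reachability} entirely as a black box: you factor a witnessing product as $P_2\,M^0\,P_1$ around a pivot $M^0=\UTM{0}{b}{c}$, note that $M^0$ collapses everything to its right into the single scalar $s=(P_1)_{22}$, and issue one ordinary vector reachability query from $s\VVe{b}{c}$. Your version is more modular (no re-engineering of the PRM construction is needed); the paper's avoids your extra layer of guessing.

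One detail in that guessing layer needs repair. The scalar $s$ must be \emph{realizable}, i.e.\ equal to $(P_1)_{22}$ for some $P_1\in\gen{\M}$, hence a product of bottom-right entries of generators; guessing ``the factors of absolute value $\ge 2$ together with a sign'' does not certify this. The large factors are fine, provided they are guessed as actual entries of generators, but the residual sign is contributed by the $\pm 1$-valued entries used in $P_1$, so a sign flip is legitimate only if some generator has bottom-right entry $-1$. As written, your verifier could accept a start vector $s\VVe{b}{c}$ whose sign no actual $P_1$ can produce, giving a false positive. The fix is immediate: guess at most $\lfloor \log_2 |T_{22}|\rfloor + 1$ generators $N_1,\ldots,N_m\in\M$ and set $s=\prod_{j}(N_j)_{22}$, which is sound and complete, since factors equal to $+1$ are redundant and a single factor $-1$ realizes any achievable sign change. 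With this correction your argument goes through and the whole procedure remains in \PSPACE.
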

\begin{proof}
We give a proof sketch for $\UT[A_{22} \neq 0]$; the detailed proof can be found in the appendix.
If $T_{11} \neq 0$, then a $\PSPACE$ decision procedure follows from Theorem~\ref{thm-membership}. If $T_{11} = 0$, then the problem reduces to a reachability problem with the additional restriction that some element of $\UT[A_{11} = 0]$ must be included in the matrix product. This problem in turn is decidable in $\PSPACE$ via a straightforward modification of the PRM $\Reg_B$ in the proof of Theorem~\ref{thm-vec-reachability}.
\end{proof}

The general membership problem, without restrictions on the position of $0$s, is related to (variants of) scalar reachability.
Theorems \ref{thm-membership-to-scalar} and~\ref{thm-scalar-to-membership} provide reductions in both ways.
\begin{restatable}{theorem}{scalarrestate}
\label{thm-membership-to-scalar}
Let $s$ be an oracle for the scalar reachability problem. The membership problem is in $\PSPACE^s$.
\end{restatable}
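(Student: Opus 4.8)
The plan is to reduce membership to a polynomially bounded number of scalar reachability queries by a case analysis on the diagonal of the target $T = \UTM{t_1}{t_2}{t_3}$. Since a product of upper-triangular matrices is non-zero in the top-left (resp.\ bottom-right) corner iff every factor is, the positions of the zeros on the diagonal of~$T$ constrain which generators may occur. If $t_1 \neq 0$ and $t_3 \neq 0$, only generators in $\UT[A_{11} \neq 0 \land A_{22} \neq 0]$ can appear, so membership is in $\NP \subseteq \PSPACE$ by Theorem~\ref{thm-membership}, with no oracle call. If exactly one diagonal entry of~$T$ is zero, the instance is one for $\UT[A_{11} \neq 0]$ or $\UT[A_{22} \neq 0]$ and is in~$\PSPACE$ by Theorem~\ref{thm-membership-diag-one-zero}. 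It thus remains to treat $t_1 = t_3 = 0$. If moreover $t_2 = 0$, then $T = \zero$ and this is mortality, which for upper-triangular matrices is decidable in polynomial time: $\zero \in \gen{\M}$ iff $\M$ contains both a matrix with top-left entry~$0$ and a matrix with bottom-right entry~$0$ (as observed at the start of Section~\ref{sec-SL}). The genuinely hard case, where the oracle is needed, is $T = \UTM{0}{t_2}{0}$ with $t_2 \neq 0$.

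The key step is a structural normal form for such products. Suppose $M^{(1)} \cdots M^{(n)} = T$ with $t_2 \neq 0$, and write $a_i, b_i, c_i$ for the entries of $M^{(i)}$. The top-left (bottom-right) entry of the product is $\prod_i a_i$ ($\prod_i c_i$), so some factor has $a_i = 0$ and some factor has $c_i = 0$; let $p$ be the least index with $a_p = 0$ and $q$ the greatest index with $c_q = 0$. A short computation of the top-right entry shows that $t_2 \neq 0$ forces $q \le p$, that the factors strictly between $q$ and $p$ lie in $\M \cap \UT[A_{11} \neq 0 \land A_{22} \neq 0]$, and that
\[
 T \ = \ \bigl(M^{(1)} \cdots M^{(q-1)}\bigr)\,\bigl(M^{(q)} N M^{(p)}\bigr)\,\bigl(M^{(p+1)} \cdots M^{(n)}\bigr), \qquad N \defeq M^{(q+1)} \cdots M^{(p-1)}.
\]
Since $M^{(q)} = \UTM{a_q}{b_q}{0}$ and $M^{(p)} = \UTM{0}{b_p}{c_p}$, the middle block equals $\UTM{0}{\beta}{0}$ with $\beta = \vec{y}^{T} N \vec{x}$ for $\vec{y} = \Ve{a_q}{b_q}$ and $\vec{x} = \Ve{b_p}{c_p}$, and multiplying by the outer blocks leaves the diagonal zero while scaling the top-right entry by $u \defeq \prod_{j<q} a_j$ and $v \defeq \prod_{j>p} c_j$, both non-zero. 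Hence $t_2 = u \cdot (\vec{y}^{T} N \vec{x}) \cdot v$.

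The converse holds as well: a prefix realizing~$u$ as a product of non-zero top-left entries of generators, the middle block, and a suffix realizing~$v$ can be concatenated, because the outer entries are irrelevant to the top-right of the resulting doubly-singular matrix. This yields the characterization: $T \in \gen{\M}$ iff there are a generator with $A_{22} = 0$ (defining $\vec{y}$), a generator with $A_{11} = 0$ (defining $\vec{x}$), integers $u,v$ dividing~$t_2$ and lying in the multiplicative monoids generated by the non-zero top-left resp.\ bottom-right entries of~$\M$, and some $N \in \gen{\M \cap \UT[A_{11} \neq 0 \land A_{22} \neq 0]}$ with $\vec{y}^{T} N \vec{x} = t_2/(uv)$. (The degenerate subcase $q = p$, in which a single doubly-zero generator forms the middle block, is checked directly.) The decision procedure loops over the polynomially many choices of the two endpoint generators, enumerates in polynomial space all admissible $u,v$ (each monoid has at most $|\M|$ generators, and the exponents of those of absolute value at least~$2$ are bounded by $\log_2 |t_2|$, while $\pm 1$ generators contribute only a sign), and for each issues one scalar reachability query on $\M \cap \UT[A_{11} \neq 0 \land A_{22} \neq 0]$ with vectors $\vec{y}, \vec{x}$ and scalar $t_2/(uv)$. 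The whole computation runs in polynomial space with oracle~$s$, giving $\PSPACE^s$.

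The main obstacle is that scalar reachability, being a single linear constraint on one product, cannot by itself force a product to be doubly singular, which is exactly what membership to $\UTM{0}{t_2}{0}$ demands. The structural normal form above is what removes this difficulty: it isolates the two singular endpoints as fixed generators and a multiplicative scaling $uv$ (handled by elementary arithmetic), and expresses the remaining, genuinely hard, part as the bilinear form $\vec{y}^{T} N \vec{x}$ over the non-singular core---precisely the shape of a scalar reachability instance.
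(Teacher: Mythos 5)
Your proof is correct and follows essentially the same route as the paper's: the same case analysis on the zeros of the diagonal of $T$, delegation of the cases with a non-zero diagonal entry to Theorems~\ref{thm-membership} and~\ref{thm-membership-diag-one-zero}, and, for the doubly-singular case with $t_2 \neq 0$, the same decomposition $t_2 = u\,(\vec{y}^T N \vec{x})\,v$ derived from the absorption identities for singular upper-triangular matrices, with the bilinear middle part handed to the scalar reachability oracle. The only differences are cosmetic: the paper lets the middle product range over all of $\gen{\M}$ and nondeterministically guesses the divisor scalars together with short witnessing products (placing that case in $\NP^s$), whereas you restrict the middle product to $\gen{\M \cap \UT[A_{11}\neq 0 \land A_{22}\neq 0]}$ and enumerate the scalars deterministically in polynomial space, which equally suffices for $\PSPACE^s$.
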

\begin{proof}
Fix some finite $\M \subseteq \UT$ and $T \in \UT$. We give a $\PSPACE^s$ procedure that decides whether $T \in \gen{\M}$ holds.
We make the following case distinction:
\begin{enumerate}
	\item \label{case_1} $T = \mathbf{0}$
	\item \label{case_2} $T \in \UT[A_{11} \neq 0 \lor A_{22} \neq 0]$
	\item \label{case_3} $T \in \UT[A_{11} = A_{22} = 0 \land A_{12} \neq 0]$
\end{enumerate}

In the first case, the membership problem is easy:
if $T = \mathbf{0} \in \gen{\M}$, then there must exist matrices $M_1 \in \UT[A_{11} = 0]\cap\M$ and $M_2 \in \UT[A_{22} = 0] \cap \M$, but then $T = \mathbf{0} = M_1 \cdot M_2$.
The existence of such $M_1, M_2$ is trivial to check.
In the second case, the problem reduces to $T\in \gen{\UT[A_{11} \neq 0]\cap \M}$ or $T\in \gen{\UT[A_{22} \neq 0]\cap \M}$, which is decidable in $\PSPACE$ by Theorem~\ref{thm-membership-diag-one-zero}. In Appendix~\ref{app:scalar} we show that the last case boils down to an instance of scalar reachability by an $\NP$ procedure, and thus can be solved in $\NP^s \subseteq \PSPACE^s$.
\end{proof}

\begin{theorem}\label{thm-scalar-to-membership}
The following sign-invariant version of the scalar-reachability problem is polynomial-time Turing-reducible to the membership problem: given $\M \subseteq \UT$ and column vectors $\vec{x}, \vec{y} \in \Z^2$, does $\vec{y}^T M \vec{x} \in \{-1, 1\}$ hold for some $M \in \gen{\M}$?
\end{theorem}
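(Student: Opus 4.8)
The plan is to record the scalar $\vec y^{T} M\vec x$ as the top‑right entry of an upper‑triangular matrix, by sandwiching $M$ between two rank‑$1$ matrices. Put $\vec e_1=\Ve{1}{0}$, $\vec e_2=\Ve{0}{1}$, and for vectors $\vec u,\vec w\in\Z^2$ set $L_{\vec u}\defeq\vec e_1\vec u^{T}=\UTM{u_1}{u_2}{0}$ and $R_{\vec w}\defeq\vec w\,\vec e_2^{T}=\UTM{0}{w_1}{w_2}$, both in $\UT$, and $T_0\defeq\UTM{0}{1}{0}=\vec e_1\vec e_2^{T}$. The identity $L_{\vec u}\,G\,R_{\vec w}=(\vec u^{T}G\vec w)\,T_0$ holds for every $G$, so that $\vec u^{T}G\vec w\in\{-1,1\}$ for some $G$ in a monoid $\gen{\mathcal N}$ iff $T_0$ or $-T_0$ lies in $L_{\vec u}\gen{\mathcal N}R_{\vec w}$. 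The reduction would issue membership queries of the form ``$\pm T_0\in\gen{\mathcal N\cup\{L_{\vec u},R_{\vec w}\}}$''. The whole difficulty is to guarantee that such a query is positive \emph{only} through an honest sandwich: a rank‑$1$ generator of the shape $c\,T_0$, for instance, already puts $T_0$ into $\gen{\M}$ and would otherwise cause a false positive.

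The clean case is $\mathcal N=\M_2\defeq\{M\in\M\mid\det M\neq0\}$. I would first show: $\vec u^{T}G\vec w\in\{-1,1\}$ for some $G\in\gen{\M_2}$ iff $T_0$ or $-T_0$ lies in $\gen{\M_2\cup\{L_{\vec u},R_{\vec w}\}}$. Forward is the identity above. For the converse, $L_{\vec u},R_{\vec w}$ are the \emph{only} non‑invertible generators, so a product equal to the rank‑$1$ matrix $\pm T_0$ must contain at least one of them. Writing the invertible factors as blocks $G_i\in\gen{\M_2}$ and each occurrence of $L_{\vec u},R_{\vec w}$ as an outer product, the whole product telescopes to a scalar multiple of $(G_0\vec p_1)(\vec q_m^{T}G_m)$, where $\vec p_1$ (resp.\ $\vec q_m$) is the column (resp.\ row) of the first (resp.\ last) non‑invertible factor. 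Since the $G_i$ are invertible and upper‑triangular, matching the column/row spaces of $T_0$ forces $\vec p_1\parallel\vec e_1$ and $\vec q_m^{T}\parallel\vec e_2^{T}$; outside the degenerate cases $u_1=0$ or $w_2=0$ this means the first non‑invertible factor is $L_{\vec u}$ and the last is $R_{\vec w}$. Moreover a pattern ``$R_{\vec w}$ before $L_{\vec u}$'' produces the connecting scalar $\vec e_2^{T}G\vec e_1=G_{21}=0$ and annihilates the product, so the non‑invertible factors appear as $L_{\vec u}\cdots L_{\vec u}R_{\vec w}\cdots R_{\vec w}$. The top‑right entry is then an integer product one of whose factors is the single ``$L$‑to‑$R$'' contribution $\vec u^{T}G^{\ast}\vec w$ with $G^{\ast}\in\gen{\M_2}$; as the whole product is $\pm1$, each integer factor is $\pm1$, and $G^{\ast}$ is the sought witness. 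The degenerate cases $u_1=0$ resp.\ $w_2=0$ collapse the scalar to $u_2w_2\prod(\cdot)_{22}$ resp.\ $u_1w_1\prod(\cdot)_{11}$ and are decided directly.

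It remains to drop the non‑singularity assumption, which is the crux. A rank‑$1$ witness $M\in\gen{\M}$ factors through its first non‑invertible factor $\vec a\vec b^{T}\in\M$ (necessarily of rank $1$) as $M=G\,(\vec a\vec b^{T})\,H$ with $G\in\gen{\M_2}$ and $H\in\gen{\M}$, whence $\vec y^{T}M\vec x=(\vec y^{T}G\vec a)(\vec b^{T}H\vec x)$; this product of two independent integers is $\pm1$ iff each factor is. I would therefore saturate over the finite set of ``left vectors'' $V\defeq\{\vec y\}\cup\{\vec b\mid\vec a\vec b^{T}\in\M\text{ is a rank-$1$ generator}\}$, defining $P(\vec v)$ to mean ``$\vec v^{T}H\vec x\in\{-1,1\}$ for some $H\in\gen{\M}$'' and using the recurrence that $P(\vec v)$ holds iff either $\vec v^{T}G\vec x\in\{-1,1\}$ for some $G\in\gen{\M_2}$, or some rank‑$1$ generator $\vec a\vec b^{T}$ satisfies both $\vec v^{T}G\vec a\in\{-1,1\}$ for some $G\in\gen{\M_2}$ and $P(\vec b)$. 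Because every element of $\gen{\M}$ is a finite product, $P$ is the least fixed point of this monotone Boolean system and is computed in at most $|V|$ iterations; every ``$\gen{\M_2}$''‑condition occurring in it is an instance of the rank‑$2$ reduction above, hence of at most two membership queries, and the answer to the theorem is $P(\vec y)$. The main obstacle is the structural telescoping lemma underlying the rank‑$2$ converse, together with the bookkeeping of the degenerate cases and the verification that this saturation faithfully captures $\gen{\M}$‑membership; granting these, only polynomially many oracle calls are made, which yields the required polynomial‑time Turing reduction.
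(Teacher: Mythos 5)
Your proposal is correct, and it shares the paper's central device---the sandwich identity $L_{\vec y}\, M\, R_{\vec x} = (\vec y^T M \vec x)\, T_0$, where the paper writes $Y$, $X$ for your $L_{\vec y}$, $R_{\vec x}$ and $\UTM{0}{\pm 1}{0}$ for your $\pm T_0$, together with the ``honest sandwich'' converse, which in the paper takes the form of the absorption identities $M A = M_{11}A$ for $A \in \UT[A_{22}=0]$ and $A M = M_{22} A$ for $A \in \UT[A_{11}=0]$---but you handle the singular generators by a genuinely different mechanism. The paper folds them into the queries in one shot: every witness collapses to $\alpha\, A\, C\, B\, \beta$ with $A \in \A \cup \{I\}$, $B \in \B \cup \{I\}$, $C \in \gen{\C}$, and the reduction asks whether $\UTM{0}{\pm 1}{0} \in \gen{\C \cup \{A,B\}}$ for each pair $(A,B) \in (\A' \cup \{Y\}) \times (\B' \cup \{X\})$, where $\A'$ (resp.\ $\B'$) is nonempty only if $|y_1| = 1$ (resp.\ $|x_2|=1$). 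You instead keep singular generators out of the queries entirely: you factor a witness through its first singular factor $\vec a\vec b^T$, use that a product of two integers is $\pm 1$ iff both are, and saturate a monotone Boolean system over polynomially many left vectors, reducing everything to the nonsingular case. This costs an iteration of oracle calls, but a Turing reduction permits that, and both routes are valid. One point in your favour deserves emphasis: your explicit direct treatment of the degenerate cases $u_1 = 0$ and $w_2 = 0$ is not optional bookkeeping---the paper skips it, and its claimed equivalence \eqref{equiv-zero} $\Leftrightarrow$ \eqref{equiv-two} actually fails there. For instance, for $\vec y = \Ve{0}{1}$, $\vec x = \Ve{0}{2}$ and $\M = \{I\}$ one has $Y = \UTM{0}{1}{0} = T_0 \in \gen{\C \cup \{Y,X\}}$, so \eqref{equiv-two} holds, yet the only reachable scalar is $\vec y^T I \vec x = 2$; the paper's step from \eqref{eq-gen} to \eqref{eq-nogen} tacitly assumes that the witnessing product contains both sandwich factors, which can only be forced when $y_1 \neq 0$ and $x_2 \neq 0$. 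So your degenerate-case analysis in fact repairs a gap in the paper's write-up. What remains for you to do is only what you yourself flagged: writing out the telescoping lemma and the (routine) least-fixed-point correctness argument in full; both go through as you sketched them.
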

\begin{proof}

Fix $\M,\vec{x},\vec{y}$.
Let $I$ be the identity, $\A \defeq \M\cap \UT[A_{22} = 0]$, $\B \defeq \M \cap \UT[A_{11} = 0]$, $\C \defeq \left(\M \setminus (\A \cup \B)\right)$,
$Y \defeq \begin{pmatrix}y_1 & y_2 \\ 0 & 0 \end{pmatrix}$, and
$X \defeq \begin{pmatrix} 0 & x_1 \\ 0 & x_2 \end{pmatrix}$. Set $\A' \defeq \A$, if $|y_{1}| = 1$, otherwise set $\A' \defeq \emptyset$; further set $\B' \defeq \B$, if $|x_{2}| = 1$, otherwise set $\B' \defeq \emptyset$.

We obtain the following equivalences:
\begin{align}
	& \exists M \in \gen{\M} \colon \vec{y}^T M \vec{x} \in \{\pm 1\}  & \Leftrightarrow & \label{equiv-zero} \\
	& \exists A \in \A\cup \{I\} , B \in \B\cup \{I\},
	        C \in \gen{\C} \colon \vec{y}^T \cdot (A\cdot C \cdot B) \cdot \vec{x} \in \{\pm 1 \} & \Leftrightarrow \label{equiv-one} & \\
	& \begin{pmatrix}0 & \pm 1 \\ 0 & 0 \end{pmatrix} \in
	\bigcup_{A \in \A' \cup \{Y\}, B \in \B' \cup \{X \}} \gen{\C \cup \{A, B \}}. \label{equiv-two}
\end{align}
 We provide detailed derivations of these equivalences in Appendix~\ref{app-proof-equivalence}. Deciding (\ref{equiv-two}) requires polynomially many queries to a membership oracle where input sizes are polynomial in the description of $\M, \vec{x}, \vec{y}$. This entails the theorem. %claim to be shown.
\end{proof}

%!TEX root = main.tex
\section{Conclusion} \label{sec-conclusions}
We have proved \PSPACE-completeness of reachability in affine register machines, and \NP-completeness of the mortality problem over two-dimensional integer matrices with determinants $+1$ or~$0$.

Motivated by their connections to affine reachability, we have studied membership, vector reachability, and scalar reachability for two-dimensional upper-triangular integer matrices.
We have established several complexity results and reductions.
Concerning upper complexity bounds, we have employed a variety of techniques, including existential Presburger arithmetic, $\Z$-VASS, PRMs, and solving linear Diophantine equations over the integers.
We have also established lower bounds, including hardness of vector reachability for affine reachability over~$\Q$, and a connection between membership and scalar reachability.

As open problem, we highlight the precise complexity (between \NP\ and~\PSPACE) of (stateless) affine reachability over~$\Z$. % as well as the complexity of the general mortality problem over two-dimensional integer matrices.

\bibliographystyle{plain}
\bibliography{lit}

\clearpage

\appendix
%!TEX root = main.tex
\section{Missing Proof of Proposition~\ref{prop-matrix-affine}}

\label{app:prop-matrix-affine}

\matrixaffinerestate*

\begin{proof}
Items 1 and 2 follow from the isomorphism $\varphi: \Aff(\Z) \to \UT[A_{22} = 1]$ with $\varphi(x \mapsto a x + b) = \UTM{a}{b}{1}$, and the injection $\varphi' : \Z \to \Z^2$ with $\varphi'(x) = \VVe{x}{1}$.
More specifically, we have $\varphi(f_2 \circ f_1) = \varphi(f_2) \varphi(f_1)$ and $\varphi'(f(x)) = \varphi(f) \varphi'(x)$.

For item~3 consider the quotient $\UT/\mathord{\sim}$ of~$\UT$ by the equivalence~$\mathord{\sim}$ with $\UTM{a_1}{b_1}{c_1} \sim \UTM{a_2}{b_2}{c_2}$ if and only if there is $\lambda \in \Q \setminus \{0\}$ with $\UTM{a_1}{b_1}{c_1} = \lambda \UTM{a_2}{b_2}{c_2}$.
We define a similar injection $\varphi: \Aff(Q) \to \UT/\mathord{\sim}$ as above such that $\varphi(x \mapsto \frac{a x + b}{c}) = \UTM{a}{b}{c}$ where $a,b,c \in \Z$ and $c \ne 0$.
It is an isomorphism, as
\[
\varphi\left(x \mapsto \frac{a_2 \frac{a_1 x + b_1}{c_1} + b_2}{c_2}\right) \ = \ \varphi\left(x \mapsto \frac{a_2 a_1 x + a_2 b_1 + b_2 c_1}{c_2 c_1}\right) \ = \ \UTM{a_2}{b_2}{c_2} \UTM{a_1}{b_1}{c_1}\,.
\]
Define also the equivalence $\mathord{\sim'}$ such that $\VVe{p_1}{q_1} \sim' \VVe{p_2}{q_2}$ holds if and only if there is $\lambda \in \Q \setminus\{0\}$ such that $\VVe{p_1}{q_1} = \lambda \VVe{p_2}{q_2}$.
Finally, define the injection $\varphi': \Q \to \Z^2 / \mathord{\sim'}$ with $\varphi\left(\frac{p}{q}\right) = \VVe{p}{q}$, for $p,q \in \Z$ and $q \ne 0$.
Then we have, similarly as in items 1 and~2:
\[
 \varphi'\left(\frac{a \frac{p}{q} + b}{c}\right) \ = \ \UTM{a}{b}{c} \Ve{p}{q} \ = \ \varphi\left(x \mapsto \frac{a x + b}{c}\right) \varphi'\left(\frac{p}{q}\right)
\]
It follows that we have $\frac{a \frac{x_1}{x_2} + b}{c} = \frac{y_1}{y_2}$ if and only if $\Ve{y_1}{y_2} \sim' \UTM{a}{b}{c} \Ve{x_1}{x_2}$, which in turn is equivalent to $\begin{pmatrix} y_2 & -y_1 \end{pmatrix} \UTM{a}{b}{c} \Ve{x_1}{x_2} = 0$.
Item~3 follows.
\end{proof}
\section{Derivation of Implications in Proof of Theorem~\ref{thm-pspace-cmp}}
\label{app:derivation-imp-thm}
We first recall the definitions central to the proof of Theorem~\ref{thm-pspace-cmp}.
Let $i \in [0, b]$ and $c \in \Z$, and define:
\begin{align*}
K & \defeq 2b + 1 &
K(i, c) & \defeq  (K + 1) c - i \cdot K.
\end{align*}

We have to show the following implications:
\begin{align}
   i \neq c  \quad\Longrightarrow\quad &  K(i, c)  \not \in [-b, 2b]  \tag{\ref{impl:one}}\\
   i = c \quad\Longrightarrow\quad & K(i, c) = i = c \in [0,b] \tag{\ref{impl:two}}
\end{align}

Let us first show $(\ref{impl:one})$.
Assume $i \neq c$. We make the following case distinction:
\begin{enumerate}
  \item $c < i$: Then $i = c + \delta$ for some $\delta > 0$. We thus obtain
  \begin{align*}
  K(i, c)
  & = (K+1)c - i \cdot K && \text{ by definition of } K(i, c),\\
  & = (K+1)c - (c+\delta) K  &&\text{ by substituting } i =  c + \delta,\\
  & = c - K \cdot \delta && \\
  & = c - (2b+1) \cdot \delta   && \text{ by definition of } K,\\
  & \leq c - (2b+1) && \text{ since } \delta > 0 \text{ by assumption},\\
  & \leq -b-1 && \text{ since } c < i \leq b.
  \end{align*}
  Hence $K(i, c) \not \in [-b, 2b]$ if $c < i$.
  \item $c > i$:
  \begin{align*}
  K(i, c)
  & = (K+1)c - i \cdot K && \text{ by definition of } K(i, c),\\
  & \geq (K+1)c - (c-1) \cdot K && \text{ since } c > i \geq 0,\\
  & = c + K \\
  & > 2b && \text{ since } c > 0 \text{ and } K = 2b+1.
  \end{align*}
  Hence $K(i, c) \not \in [-b, 2b]$ if $c > i$.
\end{enumerate}
This completes the proof of $(\ref{impl:one})$.
Now, for $(\ref{impl:two})$, observe that setting $i = c$ satisfies $K(i, c) = i = c$.  This completes the proof of $(\ref{impl:two})$.

\section{Details of the Proof of Lemma~\ref{lemma_diag_one_np}}
\label{app-z-vass-induction}
We continue with the proof of Lemma~\ref{lemma_diag_one_np}.

A (one-dimensional) \emph{$\Z$-VASS} can be described as a triple $(Q,\Sigma,\delta)$ where $Q$ is a finite set of states, $\Sigma \subset \Z$ is a finite set of integer numbers, and $\delta \subseteq Q \times \Sigma \times Q$ is a finite transition relation.
A \emph{run from $q_0 \in Q$ to $q_n \in Q$ of length~$n$} is a sequence $q_0 a_1 q_1 a_2 q_2 \cdots a_n q_n$ such that $(q_{i-1}, a_i, q_i) \in \delta$ holds for all $i \in \{1, \ldots, n\}$.
The \emph{value} of such a run is defined as $\sum_{i=1}^n a_i$.
The \emph{$\Z$-VASS reachability} problem is to decide, given a $\Z$-VASS, two states $q, q' \in Q$, and a number $t \in \Z$, whether there is a run from $q$ to~$q'$ with value~$t$.

We give a polynomial reduction from the membership problem to the reachability problem for $\Z$-VASS:
Define $Q \defeq \{(+1,+1),(+1,-1),(-1,+1),(-1,-1)\}$, each state reflecting the diagonal.
A transition $(q, a, q') \in \delta$ corresponds to a multiplication with a matrix $M \in \M$.
More precisely, for each state $(s, t)$ and each $\UTM{s'}{a'}{t'} \in \M$ we add a transition $((s, t), s t t' a', (s s', t t'))$.
We will prove:
\begin{claim} \label{claim-app-Z-VASS}
There exists a run from $(+1,+1)$ to $(s_n,t_n)$ of length~$n$ and value $a \in \Z$ if and only if there are matrices $M_1, \ldots, M_n \in \M$ such that $M_1 \cdots M_n = \UTM{s_n}{t_n a}{t_n}$.
\end{claim}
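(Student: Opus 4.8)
The plan is to prove the claim by induction on the run length~$n$, exploiting that all diagonal entries lie in $\{+1,-1\}$ and that products of matrices in $\UT$ are easy to compute. The underlying arithmetic is the product rule
\[
 \UTM{S}{A}{T}\UTM{s'}{a'}{t'} \ = \ \UTM{S s'}{S a' + A t'}{T t'},
\]
which shows in particular that the diagonal of a product $M_1\cdots M_n$ is the coordinatewise product of the diagonals of the factors. I would therefore read the state component $(s_n,t_n)$ reached by a run as exactly this pair of diagonal products, so that the state and the diagonal of the matrix product agree by construction.

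For the base case $n=0$, the empty run stays in $(+1,+1)$ and has value $0$, while the empty product is the identity $\UTM{1}{0}{1}$; both sides of the equivalence then hold simultaneously with $s_0=t_0=1$ and $a=0$.

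For the inductive step I would peel off the last transition of a length-$(n+1)$ run. Its prefix is a length-$n$ run from $(+1,+1)$ to some $(s_n,t_n)$ of value $a'$, and the last transition $((s_n,t_n),\, s_n t_n t' a'',\, (s_n s', t_n t'))$ arises from a matrix $\UTM{s'}{a''}{t'}\in\M$; thus the new state is $(s_{n+1},t_{n+1})=(s_n s', t_n t')$ and the total value is $a = a' + s_n t_n t' a''$. By the induction hypothesis the prefix corresponds to a product $M_1\cdots M_n = \UTM{s_n}{t_n a'}{t_n}$. Multiplying on the right by the last matrix via the product rule yields top-right entry $s_n a'' + t_n t' a'$. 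On the other hand, the entry predicted by the claim is $t_{n+1} a = t_n t'(a' + s_n t_n t' a'')$, and the key simplification is $(t_n t')^2 = 1$ (valid precisely because $t_n,t'\in\{\pm1\}$), which collapses this to $t_n t' a' + s_n a''$, matching the product; the diagonal matches since $(s_n s', t_n t') = (s_{n+1},t_{n+1})$. The converse direction is the same computation read in reverse: a matrix sequence $M_1,\dots,M_n$ started at $(+1,+1)$ determines a unique run, since each matrix together with the current diagonal product (taken as the current state) induces exactly one admissible transition, and the value of that run is exactly the $a$ extracted from the top-right entry through the relation above.

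The main obstacle is simply getting the sign bookkeeping right: the transition values carry the factor $s t t'$, and without it the top-right entries would fail to align across the induction. The identity $(t_n t')^2 = 1$ is what makes this compensation exact, and confirming it is the only genuinely delicate point; everything else reduces to the routine upper-triangular multiplication displayed above.
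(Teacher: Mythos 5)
Your proof is correct and follows essentially the same route as the paper: induction on the run length, using the upper-triangular product rule and the cancellation $(t_n t')^2=1$ to reconcile the transition value $s_n t_n t' a''$ with the top-right entry of the matrix product, with the converse obtained by reading the computation backwards. The only difference is presentational (you peel off the last transition and keep the entries of $M_{n+1}$ explicit, while the paper appends a transition and rewrites $M_{n+1}$ in terms of the transition value), which does not change the argument.
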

Claim~\ref{claim-app-Z-VASS} implies that there is $\UTM{s}{a}{t} \in \gen{\M}$ if and only if there is a run from $(+1,+1)$ to $(s,t)$ of value~$t_n a$. Recall that the reachability relation of the $\Z$-VASS is definable in an existential Presburger formula of polynomial size, and thus so is the query $\UTM{s}{a}{t} \in \gen{\M}$.
Lemma~\ref{lemma_diag_one_np} follows.
It remains to prove Claim~\ref{claim-app-Z-VASS}:
\begin{proof}[Proof of Claim~\ref{claim-app-Z-VASS}]
We prove this claim by induction on~$n$.
The case $n=0$ implies $a=0$ and is easy.
For the step, suppose there is a run from $(+1,+1)$ to $(s_n,t_n)$ of length~$n$ and value~$a$.
By the induction hypothesis, there are matrices $M_1 \cdots M_n = \UTM{s_n}{t_n a}{t_n}$.
Consider the run of length $n+1$ and value $a + x$ obtained by extending the previous run by a transition $((s_n,t_n), x, (s_{n+1}, t_{n+1})) \in \delta$.
The definition of~$\delta$ implies that there is matrix $M_{n+1} = \UTM{s'}{a'}{t'} \in \M$ such that $x = s_n t_n t' a'$ and $s_{n+1} = s_n s'$ and $t_{n+1} = t_n t'$.
Hence:
\begin{align*}
M_1 \cdots M_{n+1} &\ = \ \UTM{s_n}{t_n a}{t_n} \UTM{s_n s_{n+1}}{s_n t_n t' x}{t_n t_{n+1}} \ = \ \UTM{s_{n+1}}{t_n t' x + t_{n+1} a}{t_{n+1}} \\
                   &\ = \ \UTM{s_{n+1}}{t_{n+1} (x + a)}{t_{n+1}}
\end{align*}
The other direction (from matrix product to run) is similar.
This proves the claim.
\end{proof}

\section{Details of the Proof of Theorem~\ref{thm-pm1}}
\label{app-thm-pm1}
We continue with the proof of item~3 of Theorem~\ref{thm-pm1}.

Let $\M'_+$ and $\M'_-$ be such that $\M' = \M'_+ \cup \M'_-$, and $\M'_+$ is the set of matrices from~$\M'$ that have $(+1,+1)$ on the diagonal, and $\M'_-$ is defined analogously.
For all $\UTM{-1}{c}{-1} \in \M'_-$, the matrix $\UTM{-1}{-c}{-1}$ is also in~$\M'_-$, as
\begin{equation*} %\label{eq-closed-flipping-signs}
\UTM{+1}{a}{-1} \UTM{-1}{b}{+1} = \UTM{-1}{a+b}{-1}  \text{ and }  \UTM{-1}{b}{+1} \UTM{+1}{a}{-1} = \UTM{-1}{-a-b}{-1}\,.
\end{equation*}
The analogous statement for $\M'_+$ holds as well.

Moreover, for any $\UTM{1}{a}{1}, \UTM{1}{b}{1} \in \gen{\M'_+}$ we also have $\UTM{1}{a+b}{1} = \UTM{1}{a}{1} \UTM{1}{b}{1} \in \gen{\M'_+}$.
Let $g \in \N$ denote the gcd of all top-right entries in~$\M'_+$, with $g=0$ in case all those entries are~$0$.
In combination with the observation at the end of the previous paragraph it follows that $\gen{\M'_+} = \UTM{1}{g \Z}{1}$, where here and later in this proof we write $\UTM{a}{B}{c}$ with $B \subseteq \Z$ to denote the set $\left\{\UTM{a}{b}{c} \;\middle\vert\; b \in B \right\}$, and $g \Z$ are the integer multiples of~$g$.
Given two sets of matrices $\A_1, \A_2$, we write $\A_1 \A_2$ to denote the set $\{A_1 A_2 \mid A_1 \in \A_1,\ A_2 \in \A_2\}$.
For $k \in \N$, let $\P_k$ denote the products of elements of~$\M'$ where exactly $k$ factors are from~$\M'_-$.
We have $\P_0 = \UTM{1}{g \Z}{1}$.
Let $S \subset \Z$ such that $\M'_- = \UTM{-1}{S}{-1}$ and define $S^{\oplus k} \defeq \left\{\sum_{i=1}^k s_i \mid s_i \in S\right\}$.
We have:
\begin{align*}
\P_1 &\ = \ \UTM{1}{g \Z}{1} \UTM{-1}{S}{-1} \UTM{1}{g \Z}{1} \ = \ \UTM{-1}{S + g \Z}{-1} \quad \text{and} \\
\P_2 &\ = \ \P_1 \P_1 \ = \ \UTM{-1}{S + g \Z}{-1} \UTM{-1}{S + g \Z}{-1} \ = \ \UTM{1}{S^{\oplus 2} + g \Z}{1} \\
\P_3 &\ = \ \P_2 \P_1 \ = \ \UTM{1}{S^{\oplus 2} + g \Z}{1} \UTM{-1}{S + g \Z}{-1} \ = \ \UTM{-1}{S^{\oplus 3} + g \Z}{-1}
\end{align*}
Continuing this pattern, we see that the set of matrices from~$\gen{\M}$ with $(+1,+1)$ on the diagonal is
\[
 \bigcup_{k \in \N} \P_{2 k} \ = \ \bigcup_{k \in \N} \UTM{1}{S^{\oplus 2 k} + g \Z}{1}\,,
\]
and the set with $(-1,-1)$ on the diagonal is
\[
 \bigcup_{k \in \N} \P_{2 k + 1} \ = \ \bigcup_{k \in \N} \UTM{-1}{S^{\oplus 2 k + 1} + g \Z}{-1}\,.
\]
It remains to show that we can efficiently check membership in such sets.
Let $S = \{s_1, \ldots, s_m\}$.
Recall that we have argued that $S = \{-s_1, \ldots, -s_m\}$.
Suppose $T = \UTM{-1}{t}{-1}$ and we want to check whether $T \in \bigcup_{k \in \N} \P_{2 k + 1}$.
This holds if and only if the linear system given by the two equations
\begin{align*}
 t  =  \sum_{i=1}^m s_i x_i + g y & &  2 k + 1  =  \sum_{i=1}^m x_i
\end{align*}
has an integer solution in the variables $x_i, y, k$.
Indeed, suppose that we have a solution for these two equations with $x_i < 0$ for some~$i$.
Let $s_j = -s_i$.
Then modifying the solution by $x_j := x_j - x_i$ and $k := k - x_i$ and $x_i := 0$ leads to another solution, but $x_i$ is no longer negative.
This can be repeated until all $x_i$ and hence $k$ are nonnegative.

It is known that affine Diophantine equations can be solved in polynomial time~\cite[Chapter~5]{Schrijver86}.
The case where the diagonal entries of $T$ are $(+1,+1)$ is similar.
As remarked in the main text, the extension to odd-length products, which have diagonals $(+1,-1)$ or $(-1,+1)$, is straightforward.
The vector reachability and scalar reachability problems (with the restriction on determinants in place) easily reduce to the membership problem, hence are also in~\PTIME.

\section{Details of the Proof of Theorem~\ref{thm-vec-reachability}}
\label{app:details-constr}
We give a formal specification of the PRM $\Reg_B$ constructed by the Turing machine $\T$ in the proof of Theorem~\ref{thm-vec-reachability}.
Fix some guess $B=B^{(1)}, \ldots , B^{(l)}$ by $\T$.
For a given matrix $M \in \UT$ and a given integer $\alpha$, let $f_{M,\alpha}$ be the affine function given by
$$f_{M,\alpha}(a) \defeq M_{11} \cdot a + M_{12} \cdot \alpha.$$

Assume \eqref{eq_compose} holds for the guessed $B$ and some $A^{(1)}, \ldots, A^{(l+1)}$.
Let $m \in \N$ and $M^{(i)} \in \M$, $1 \leq i \leq m$ be such that
\begin{align*}
 M^{(m)} \cdot M^{(m-1)} \cdot \ldots \cdot M^{(1)}  =
          A^{(l+1)} \cdot
          B^{(l)} \cdot
          A^{(l)} \cdot \ldots
          \cdot B^{(2)}
          \cdot A^{(2)}
          \cdot B^{(1)}
          \cdot A^{(1)}.
 \end{align*}
For $0 \leq i \leq m$, set $\alpha(i) \defeq x_2 \cdot \prod_{j=1}^{i-1}M^{(j)}_{22}$.
Then it is easy to verify that the following holds:
\begin{align*}
  y_1 \ = &\  \left(f_{M^{(m)}, \alpha(m)} \circ f_{M^{(m-1)}, \alpha(m-1)} \ldots \circ f_{M^{(2)}, \alpha(2)}\circ f_{M^{(1)}, \alpha(1)} \right) (x_1) \\
  y_2 \ = &\  M^{(m)}_{22}\cdot \alpha(m) \\
  \alpha(i) \ \in &\ \Big\{\pm x_2 \cdot \prod_{j=1}^k B^{(j)}_{22} \mid k \in \{0, \ldots, l \} \Big\} \text{ for every } 1 \leq i \leq m.
\end{align*}

The PRM $\Reg_B=(Q, \Delta, \lambda)$ can then be defined as
$Q \defeq \left\{\pm x_2 \cdot \prod_{j=1}^k B^{(j)}_{22} \mid k \in \{0, \ldots, l \} \right\}$, $\Delta \defeq Q \times Q$, and the labelling function $\lambda$ is uniquely defined by the following constraints:
\begin{align}
\lambda((\alpha, \alpha')) \subseteq \{f_{M, \alpha} \mid M \in \M \} & \text{ for every } (\alpha, \alpha') \in Q \times Q, \\
f_{M, \alpha} \in \lambda((\alpha, \alpha')) \Longleftrightarrow \alpha' = M_{22} \cdot \alpha & \text{ for every } (\alpha, \alpha') \in Q \times Q,\ M \in \M.
\end{align}
By construction, $(x_2, x_1) \trans{}^*_{\Reg_B} (y_2, y_1)$ holds if \eqref{eq_compose} holds for some matrices $A^{(i)}$, $1 \leq i \leq l+1$. Conversely, if $(x_2, x_1) \trans{f_{M_1, \alpha_1}}_{\Reg_B} \circ \trans{f_{M_2, \alpha_2}}_{\Reg_B} \circ \ldots \circ \trans{f_{M_k, \alpha_k}}_{\Reg_B} (y_2, y_1)$ holds for some $M_1, \ldots, M_k \in \M$, then $\vec{y} = M_k \cdot \ldots \cdot M_1 \cdot \vec{x}$ is a witness for reachability.
Notice that $\Reg_B$ is polynomial in the size of the input, and thus by Theorem~\ref{thm-fgh}, $\T$ can verify in polynomial space whether matrices $A^{(i)}$ exist such that \eqref{eq_compose} holds for the guess $B$, or a permuted subsequence of $B$.

\section{Details of the proof of Theorem~\ref{thm-membership-to-scalar}}
\label{app:scalar}
\scalarrestate*
\label{app:thm-membership-to-scalar}
Recall that we made the following case distinction in the main part:
\begin{enumerate}
  \item \label{case_1} $T = \mathbf{0}$
  \item \label{case_2} $T \in \UT[A_{11} \neq 0 \lor A_{22} \neq 0]$
  \item \label{case_3} $T \in \UT[A_{11} = A_{22} = 0 \land A_{12} \neq 0]$
\end{enumerate}
It remains to prove the last case. We claim that the last case can be solved in $\NP^s$.
Assume for the moment that $T \in \gen{\M}$, that is, $T = M_1 \ldots M_k$ for some matrices $M_i \in \gen{M}$.
Since the product of any two matrices from $\UT[A_{11} = A_{22} = 0 \land A_{12} \neq 0]$ equals the zero matrix, we know that there is at most one index $i$ such that  $M_i \in \UT[A_{11} = A_{22} = 0 \land A_{12} \neq 0]$. Moreover, if no such index $i$ exists, it is easy to verify that there must be some indices $1 \leq i < j \leq k$ such that $M_i \in \UT[A_{22} = 0]$ and $M_j \in \UT[A_{11} = 0]$. Moreover, observe that for every $M \in \UT$, we have $A \cdot M = M_{22}\cdot A$ and $M\cdot A = M_{11}\cdot A$ for every $A \in \UT[A_{11} = 0]$ and every $A \in \UT[A_{22} = 0]$, respectively. From these observations, we conclude that $T \in \gen{\M}$ if and only if one of the following equalities is satisfied for some
$M, M', M'' \in \gen{\M}$:
\begin{align}
  T & = M_{11} \cdot A \cdot M'_{22} && \text{ for some } A \in \M \cap \UT[A_{11} = A_{22} = 0]\,, \label{eq_prod_one} \\
  T & = M_{11} \cdot A \cdot M'' \cdot B \cdot M'_{22} && \text{ for some } A, B \in \M \text{ s.t.\ } A_{22} = 0 \land B_{11} = 0. \label{eq_prod_two}
\end{align}

Note that the absolute values of $M_{11}$ and $M'_{22}$ in (\ref{eq_prod_one}) and (\ref{eq_prod_two}) are bounded by $|T_{12}|$. Moreover, for a product of upper triangular matrices $M = M^{(1)} \cdot \ldots \cdot M^{(k)}$, we have the identity $M_{11} = M^{(1)}_{11} \cdot \ldots \cdot M^{(k)}_{11}$, and thus we may assume without loss of generality that $M$ can be written as a product of matrices from $\M$ whose length is bounded by $\mathcal{O}(\text{log}(|T_{12}|))$, and similarly for $M'$. Now, in order to decide the membership problem, a nondeterministic Turing machine can first guess whether (\ref{eq_prod_one}) or (\ref{eq_prod_two}) holds. If it guesses (\ref{eq_prod_one}), it proceeds to guess the $|T_{12}|$-bounded scalars  $M_{11}$ and $M'_{22}$ and the matrix $A$, verifies the existence of $M$ and $M'$ in nondeterministic polynomial time via guessing the corresponding products of length $\mathcal{O}(\text{log}(|T_{12}|))$, and it finally tests for equality. Thus $(\ref{eq_prod_one})$ can be decided in $\NP$. If the machine guesses $(\ref{eq_prod_two})$, then it proceeds to guess $\alpha=M_{11}$ and $\beta = M'_{22}$, and it also guesses and verifies corresponding matrices $M \in \gen{\M}$ and $M' \in \gen{M'}$ in non-deterministic polynomial time like in the procedure for $(\ref{eq_prod_one})$. The the machine guesses matrices $A \in \M$ and $B \in \M$ such that $A_{22} = 0$ and $B_{11} = 0$ holds. Finally, the machine uses the oracle $s$ to verify the existence of $M'' \in \gen{\M}$ such that $T_{12}=\alpha \begin{pmatrix} A_{11} & A_{12} \end{pmatrix} M''  \begin{pmatrix} B_{12} \\ B_{22} \end{pmatrix} \beta$. Thus, both (\ref{eq_prod_one}) and (\ref{eq_prod_two}) can be decided in $\NP^s$, which establishes membership in $\NP^s \subseteq \PSPACE^s$ for the last case.

\section{Proof of Theorem~\ref{thm-membership-diag-one-zero}}
\memberrestate*
\begin{proof}
We only give the proof for $\UT[A_{22} \neq 0]$; the proof for $\UT[A_{11} \neq 0]$ is symmetric.
Fix some $T \in \UT$ and $\M \subseteq \UT[A_{22} \neq 0]$.
If $T_{11} \neq 0$, then $T \in \gen{\UT[A_{11} \neq 0] \cap \M}$ must hold, which by Theorem~\ref{thm-membership} can be verified in~\NP, and thus in~\PSPACE.
So let us now assume $T_{11} = 0$. Then we have
\begin{align*}
	& & T & =\begin{pmatrix} 0 & T_{12} \\ 0 & T_{22} \end{pmatrix} \in \gen{\M} & \\
	& \Leftrightarrow & T & = M_1 \cdot A \cdot M_2 & \\
    & \Leftrightarrow & \begin{pmatrix}T_{12} \\ T_{22} \end{pmatrix} & =  M_1 \cdot A \cdot M_2 \cdot \begin{pmatrix}0 \\ 1 \end{pmatrix} & \text{ for some } M_1, M_2 \in \gen{\M}, A \in \M \cap \UT[A_{11} = 0]. &
\end{align*}
Thus our problem reduces to the following constrained reachability problem:
Is $\vec{y} = \begin{pmatrix}T_{12} & T_{22} \end{pmatrix}^T$ reachable from $\vec{x} = \begin{pmatrix}0 & 1 \end{pmatrix}^T$ via a product of matrices from $\M$, such that at least one matrix from $\UT[A_{11} = 0]$ occurs in the product? This problem is decidable in $\PSPACE$ via a minor modification of the approach outlined in the proof of Theorem~\ref{thm-vec-reachability}: The polynomial register machine constructed in the proof of Theorem~\ref{thm-vec-reachability} only needs to additionally track whether a matrix from $\UT[A_{11} = 0]$ has been applied.
With $\Reg_B = (Q, \Delta, \lambda)$ from the proof of Theorem~\ref{thm-vec-reachability}, this can be realized with the modified polynomial register machine $\Reg'_B = (Q', \Delta', \lambda')$ where $Q' = \{0, 1\} \times Q$, $\Delta' = Q' \times Q'$ and $\lambda' \colon Q'\times Q' \rightarrow \lambda(Q \times Q)$ is defined minimally such that for every $q, q' \in Q$, every $f_{M, \alpha} \in \lambda((q, q'))$, and every $b \in \{0, 1\}$ the following holds:
\begin{align*}
(b, q) \trans{f_{M, \alpha}} (b, q') & \text{ if } M \in \UT[A_{11} \neq 0], \\
(b, q) \trans{f_{M, \alpha}} (1, q')  & \text{ if } M \in \UT[A_{11} = 0].
\end{align*}
After guessing the matrix sequence $B$ as outlined in the proof of Theorem~\ref{thm-vec-reachability}, a nondeterministic Turing machine only needs to verify $(0, (0, 1)) \trans{*}_{\Reg'_B} (1,(T_{22}, T_{21}))$ in order to obtain a positive witness for the constrained reachability problem, which can be done in $\PSPACE$, and the statement follows.
\end{proof}

\section{Proof of Equivalences in Section~\ref{section-membership}}
\label{app-proof-equivalence}
In this section we prove the equivalences ($(\ref{equiv-zero}) \Leftrightarrow (\ref{equiv-one})$) and $((\ref{equiv-one}) \Leftrightarrow (\ref{equiv-two}))$ from Section~\ref{section-membership}.
Let us recall the definitions of Section~\ref{section-membership}:

Fix $\M \subseteq \UT$, $\vec{x}$, $\vec{y}$.
Define $\A \defeq \M\cap \UT[A_{22} = 0]$, $\B \defeq \M \cap \UT[A_{11} = 0]$,$\C \defeq \left(\M \setminus (\A \cup \B)\right)$,
$Y \defeq \begin{pmatrix}y_1 & y_2 \\ 0 & 0 \end{pmatrix}$, and
$X \defeq \begin{pmatrix} 0 & x_1 \\ 0 & x_2 \end{pmatrix}$.

Further define
\begin{align*}
\A' \defeq \begin{cases} \A & \text{ if } |y_{1}| = 1, \\
                           \emptyset & \text{ otherwise.}
            \end{cases} \\
\B' \defeq \begin{cases} \B & \text{ if } |x_{2}| = 1, \\
                         \emptyset & \text{ otherwise.}
            \end{cases}
\end{align*}

Then the following equivalences hold:
\begin{align}
	& \exists M \in \gen{\M} \colon \vec{y}^T M \vec{x} \in \{\pm 1\}  & \Leftrightarrow & \tag{\ref{equiv-zero}} \\
	& \exists A \in \A\cup \{I\} , B \in \B\cup \{I\},
	        C \in \gen{\C} \colon \vec{y}^T \cdot (A\cdot C \cdot B) \cdot \vec{x} \in \{\pm 1 \} & \Leftrightarrow \tag{\ref{equiv-one}} & \\
	& \begin{pmatrix}0 & \pm 1 \\ 0 & 0 \end{pmatrix} \in
	\bigcup_{A \in \A' \cup \{Y\}, B \in \B' \cup \{X \}} \gen{\C \cup \{A, B \}}. \tag{\ref{equiv-two}}
\end{align}

\noindent{\textbf{Proof of} $\mathbf{(\ref{equiv-zero}) \Leftrightarrow (\ref{equiv-one})}$.}
The direction $(\ref{equiv-one}) \Rightarrow (\ref{equiv-zero})$ is immediate.
Let us now prove $(\ref{equiv-zero}) \Rightarrow (\ref{equiv-one})$.
Assume $(\ref{equiv-zero})$:
\begin{align*}
\exists M \in \gen{\M} \colon \vec{y}^T M \vec{x} \in \{\pm 1\}.
\end{align*}
Fix $M$.
If $M \in \gen{\C}$, then (\ref{equiv-one}) is directly implied.
So let us assume $M \not \in \gen{\C}$ instead. We only need to consider the case where $M$ can be decomposed into
\begin{align}\label{eq-decompose-m}
M = M_1 \cdot A \cdot C \cdot B \cdot M_2
\end{align}
 such that $M_1 \in \gen{\M}, A \in \A, B \in \B, C \in \C$, and $M_2 \in \gen{\M}$. The other cases are similar. For every $M \in \UT$ there exist $\alpha, \beta \in \Z$ such that $M\cdot A = \alpha A$ and $B \cdot M = \beta B$, and thus we can rewrite $(\ref{eq-decompose-m})$ to
 \begin{align}\label{eq-new-decompose-m}
  M = \alpha \cdot A\cdot C \cdot B \cdot \beta
 \end{align}
 for some $\alpha, \beta \in \Z$.
Now, by assumption $\vec{y}^T M \vec{x} \in \{\pm 1\}$ holds, thus we must have $|\alpha| = |\beta| = 1$ in $(\ref{eq-new-decompose-m})$. From this we obtain $\vec{y}^T (A\cdot C \cdot B) \vec{x} \in \{\pm 1\}$. Hence $(\ref{equiv-one})$ holds. This proves the implication.

\medskip

\noindent{\textbf{Proof of} $\mathbf{(\ref{equiv-one}) \Leftrightarrow (\ref{equiv-two})}$.}
Let us first prove the implication $(\ref{equiv-one}) \Rightarrow (\ref{equiv-two})$.
To this end, assume that $(\ref{equiv-one})$ holds:
\begin{align*}
\exists A \in \A\cup \{I\} , B \in \B\cup \{I\},
	        C \in \gen{\C} \colon \vec{y}^T \cdot (A\cdot C \cdot B) \cdot \vec{x} \in \{\pm 1 \}
\end{align*}
Fix $A, B, C$. If $A = B = I$, then (\ref{equiv-two}) is immediate via

\begin{align*}
\vec{y}^T \cdot C \cdot \vec{x} \in \{\pm 1 \} \Leftrightarrow (Y \cdot C \cdot X) = \begin{pmatrix}0 & \pm 1 \\ 0 & 0 \end{pmatrix}.
\end{align*}

So let us consider the most instructive case where $A \in \A \setminus \{I\}$ and $B \in \A \setminus \{I\}$ holds -- the remaining cases can be proved by analogous reasoning.
In this case, we obtain from $(\ref{equiv-one})$ the equality:
\begin{align}\label{eq-above}
	Y \cdot A \cdot C \cdot B \cdot X = y_{1} \cdot A \cdot C \cdot B \cdot x_2 = \begin{pmatrix}0 & \pm 1 \\ 0 & 0 \end{pmatrix}
\end{align}
From $(\ref{eq-above})$ we obtain $|y_1| = |x_2| = 1$, and in particular $\A' = \A$ and $\B' = \B$, and thus:
\begin{align}
A \cdot C \cdot B = \begin{pmatrix}0 & \pm 1 \\ 0 & 0 \end{pmatrix}
\end{align}
where $A \in \A' = \A$ and $B \in \B' = \B$.
This implies $(\ref{equiv-two})$, and we are done showing the implication $(\ref{equiv-one}) \Rightarrow (\ref{equiv-two})$.

Now let us prove the implication $(\ref{equiv-two}) \Rightarrow (\ref{equiv-one})$.
Assume $(\ref{equiv-two})$:

\begin{align*}
\begin{pmatrix}0 & \pm 1 \\ 0 & 0 \end{pmatrix} \in
	\bigcup_{A \in \A' \cup \{Y\}, B \in \B' \cup \{X \}} \gen{\C \cup \{A, B \}}.
\end{align*}

Fix $A\in \A' \cup \{Y\}, B\in \B' \cup \{X \}$ such that
\begin{align}
\label{eq-gen}
\begin{pmatrix}0 & \pm 1 \\ 0 & 0 \end{pmatrix} \in  \gen{\C \cup \{A, B \}}.
\end{align}
By the argument given in the proof of
$(\ref{equiv-zero}) \Rightarrow (\ref{equiv-one})$, $(\ref{eq-gen})$ entails:
\begin{align}
\label{eq-nogen}
\begin{pmatrix}0 & \pm 1 \\ 0 & 0 \end{pmatrix} = A\cdot C \cdot B \text{ for some } C \in \gen{\C}.
\end{align}
Now, if $A = Y$ and $B = X$, $(\ref{equiv-one})$ follows. In the other cases $\A'$ or $\B'$ is non-empty, which by definition of $\A'$ and $\B'$ means that $|y_1| = 1$ or $|x_1| = 1$ must hold, which in turn implies that $Y\cdot A = \pm A$ or  $B\cdot X = \pm B$. Let us consider the most instructive case where $A \neq Y$ and $B \neq X$; the other cases are similar. Then by the previous remarks, we have:
\begin{align*}
Y \cdot A \cdot C \cdot B \cdot X = \pm A \cdot C \cdot \pm B = \begin{pmatrix}0 & \pm 1 \\ 0 & 0 \end{pmatrix}.
\end{align*}
This entails $(\ref{equiv-one})$, and we are done.

\end{document}